\documentclass[a4paper,10pt,oneside]{article}
\usepackage[utf8]{inputenc}
\usepackage{amsthm}
\usepackage{amsmath}
\usepackage{amscd}
\usepackage{amssymb}
\usepackage{algorithm}
\usepackage{algpseudocode}
\usepackage{mathtools}
\usepackage[T1]{fontenc}
\usepackage{authblk}
\usepackage{graphicx}
\usepackage{stmaryrd}
\usepackage[top=2cm,bottom=2.5cm,left=2.5cm,right=2.5cm,includefoot]{geometry}
\usepackage{multicol}
\usepackage{subcaption}
\usepackage{array}
\usepackage{pdfpages}

\usepackage{hyperref}
\usepackage{tikz}
\usetikzlibrary{shapes,arrows.meta,positioning,decorations.pathmorphing,decorations.pathreplacing,calc}

\hypersetup{
    colorlinks,
    citecolor=black,
    filecolor=black,
    linkcolor=black,
    urlcolor=black
}
\newcolumntype{"}{@{\hskip\tabcolsep\vrule width 1.5pt\hskip\tabcolsep}}

\newtheorem{definition}{Definition}[section]
\newtheorem{theorem}{Theorem}[section]
\newtheorem{lemma}{Lemma}[section]
\newtheorem{remark}{Remark}[section]
\newtheorem{corollary}{Corollary}[section]
\newtheorem{observation}{Observation}[section]

\newtheorem{problem}{Problem}[section]

\usepackage{fancyhdr}
\fancypagestyle{plain}{ %
	\fancyhf{} %
	\fancyfoot[C]{\thepage}

      }

\usepackage{titling}
\usepackage[hang,flushmargin]{footmisc}
\usepackage[yyyymmdd]{datetime}

\newcommand\nnfootnote[1]{%
  \begin{NoHyper}
    \renewcommand\thefootnote{}\footnote{#1}%
    \addtocounter{footnote}{-1}%
  \end{NoHyper}
}

\def\1{\mathbb 1}

\DeclareMathOperator{\poly}{poly}

\begin{document}

\title{Temporal matching in trees}

\author{
  M\'ark Hunor Juh\'asz\thanks{Department of Operations Research, E\"otv\"os Lor\'and University Budapest, P\'azm\'any P.\ s.\ 1/c, Budapest H-1117. E-mail: {\tt markh.shepherd@gmail.com}}
  \and P\'eter Madarasi\thanks{HUN-REN Alfr\'ed R\'enyi Institute of Mathematics, and Department of Operations Research, E\"otv\"os Lor\'and University Budapest, P\'azm\'any P.\ s.\ 1/c, Budapest H-1117. E-mail: {\tt madarasi@renyi.hu}}
}

\date{\vspace*{-24pt}}

\maketitle
\nnfootnote{\vspace{6pt}Corresponding author: P\'eter Madarasi\\Supported by the NRDI EXCELLENCE-24 grant no.~151504 Combinatorics and Geometry.}

\begin{abstract}

  We study maximum matching problems in temporal graphs whose underlying graph is a tree.
  We consider two temporal models.
  In a $\Delta$-matching, selected time edges sharing an endpoint must have time ticks differing by at least $\Delta$.
  In a $\gamma$-matching, the selected objects are blocks of $\gamma$ consecutive appearances of the same underlying edge.
  We also consider the related ordered static problem of $d$-distance matchings.

  We show that maximum $\Delta$-matching remains NP-hard on temporal trees for every $\Delta\geq 2$, even in the sparse case where each edge appears at most twice.
  Using a reduction between the temporal models, we obtain the analogous result for maximum $\gamma$-matching on temporal trees, even when each edge admits at most two $\gamma$-edges.
  We also show, via a reduction from $d$-distance matching, that maximum $\gamma$-matching is APX-hard even when the underlying graph is bipartite.

  Complementing these hardness results, we identify several tractable cases.
  We prove that maximum $\Delta$-matching is polynomial-time solvable on temporal trees in which every edge appears exactly once, and that maximum $\gamma$-matching is polynomial-time solvable when each edge admits at most one $\gamma$-edge.
  We also give dynamic-programming algorithms under bounded local-use and local-sparsity assumptions, and derive polynomial-time solvability of maximum $d$-distance matching when the input bipartite graph is a tree.
  Finally, we prove that both maximum $\Delta$-matching and maximum $\gamma$-matching admit polynomial-time approximation schemes on temporal trees.
\medskip

\noindent\textbf{Keywords:} temporal graphs, matchings, $\Delta$-matching, $\gamma$-matching, $d$-distance matching, NP-hardness, PTAS, dynamic programming.
\end{abstract}

\section{Introduction}

Temporal graphs have become a widely used framework for modeling systems whose connections change over time, and they have recently attracted attention from several directions~\cite{survey1,survey3,survey2,survey4}.
In this paper, we study extensions of the classical maximum matching problem to the temporal setting, with a special focus on trees.

In a static graph, a matching is a set of independent edges, that is, no two selected edges share an endpoint.
In a temporal graph, the underlying graph is fixed, and each edge is active only at specified discrete times.
Thus an edge may be active at one time tick, inactive at a later tick, and active again later still.
When an edge $e$ is active at time tick $\tau$, the pair $(e,\tau)$ is called a time edge.
There is no unique canonical extension of the classical notion of matching to the temporal setting, since the appropriate independence condition depends on the intended effect of selecting a time edge.
Should the endpoints of a selected time edge be unavailable for a short recovery period?
Should an edge have to remain active for several consecutive ticks in order to be selected?
Different definitions lead to genuinely distinct matching models.

This paper studies two temporal matching models and a closely related ordered static problem.
In a $\Delta$-matching, the selected objects are individual time edges.
Two selected time edges may share a vertex only if their time ticks differ by at least $\Delta$.
Thus $\Delta$ is a separation parameter: after a vertex is matched at some time tick, it cannot be matched again within the next $\Delta-1$ ticks.
When $\Delta=1$, this only forbids using a vertex twice at the same tick; for larger $\Delta$, the choices made at nearby ticks interact.

The second model is $\gamma$-matching.
Here a selected object is a block of $\gamma$ consecutive appearances of the same underlying edge.
For instance, if an edge $e$ is active at all ticks in $[\tau,\tau+\gamma-1]$, then these appearances form one possible $\gamma$-edge.
Two such objects are compatible if the corresponding underlying edges are disjoint, or if their time intervals are disjoint.
Thus the $\gamma$-model captures edge selections that last for a consecutive interval of time, whereas the $\Delta$-model captures instantaneous edge selections subject to a temporal separation rule at the endpoints.

A related static problem is $d$-distance matching, or $d$-matching for short.
It is defined on a bipartite graph $G=(S,T,E)$ with the side $S=\{s_1,\dots,s_n\}$ ordered.
The goal is to choose as many edges as possible so that each vertex of $S$ is incident with at most one chosen edge, and whenever two chosen edges share an endpoint in $T$, their endpoints in $S$ have indices differing by at least $d$.
Although this definition is not temporal, the ordering of $S$ behaves like a time axis.
In this paper, we make this connection explicit and use it in two ways.
First, a reduction to $\gamma$-matching transfers the known APX-hardness of $d$-matching to maximum $\gamma$-matching on temporal graphs whose underlying graph is bipartite.
Second, viewing $d$-matching as a $\Delta$-matching instance in which every edge appears exactly once yields polynomial-time solvability of $d$-matching when the input bipartite graph is a tree.

\subsection{Previous work}

The maximum $\Delta$-matching problem was introduced and studied in~\cite{Delta1} under the name temporal matching.
It was shown there that the problem is APX-complete even if $\Delta=2$, the lifetime is $3$, and every edge of the underlying graph appears only once.
The corresponding decision problem was also shown to be NP-complete under several strong restrictions, including the case where $\Delta=2$ and the underlying graph is a path.
On the positive side,~\cite{Delta1} gave a $\frac{\Delta}{2\Delta-1}$-approximation algorithm and an exact FPT algorithm parameterized by the combination of $\Delta$ and the size of a maximum matching in the underlying graph.
A faster FPT algorithm was later obtained in~\cite{Delta2}, parameterized by the so-called $\Delta$-vertex-cover number.

The maximum $\gamma$-matching problem was introduced in~\cite{Gamma1} under the name temporal matching.
It was shown there that the problem is NP-hard for every $\gamma>1$.
The same work also gave a kernelization algorithm parameterized by the solution size and a $\frac{1}{2}$-approximation algorithm.
The problem was further studied in~\cite{Gamma2}, where an exact FPT algorithm parameterized by the number of vertices was obtained for the general case.
The same paper also gave a PTAS for temporal geometric graphs of bounded density.
Both the $\gamma$-model and the $\Delta$-model have appeared under the name temporal matching in the literature, but we keep the two notions separate throughout the paper.

Other temporal matching models select underlying edges rather than individual time edges or fixed-length blocks.
In the maximum $0$-$1$ timed matching problem~\cite{MandalGupta2022}, two selected edges are incompatible if they share an endpoint and overlap in time.
It was shown that the problem is NP-complete on rooted temporal trees when each edge is associated with at most two time intervals, and polynomial-time solvable on rooted temporal trees when each edge is associated with a single time interval.
A related edge-level temporal matching model, together with temporal edge cover, was studied in~\cite{CioniDondiMarinoSchoetersSilva2025}; NP-completeness results were obtained there, including for temporal graphs whose underlying graph is a tree.
These models differ from the $\gamma$- and $\Delta$-models studied here, since the selected objects are static edges rather than time edges or consecutive blocks of appearances.

The $d$-matching problem was introduced in~\cite{Madarasi1} under the name $d$-distance matching.
The problem was shown to be NP-complete in general.
The same paper also gave an FPT algorithm parameterized by $d$ and a polynomial-time algorithm for the case where $|T|$ is constant.
From an approximation and polyhedral point of view,~\cite{Madarasi1} proved that the integrality gap of the natural integer programming formulation is at most $2-\frac{1}{2d-1}$, and gave an LP-based approximation algorithm for the weighted problem with approximation ratio $\frac{2d-1}{4d-3}$.
It also gave a combinatorial $\frac{d}{2d-1}$-approximation algorithm for the weighted problem and a local-search $\left(\frac{2}{3}-\varepsilon\right)$-approximation algorithm for the unweighted problem, for every constant $0<\varepsilon<\frac{2}{3}$.
Further results were obtained in~\cite{Madarasi2}, which studies the more general $d$-distance $b$-matching problem and its cyclic variant.
Among other results,~\cite{Madarasi2} proved APX-hardness for the unweighted cardinality version, improved the approximation ratio for the ordinary non-cyclic $d$-matching problem to $\frac{d}{2(d-1)}$ for $d\geq 2$, and obtained tight or improved integrality-gap bounds for the corresponding natural integer programming formulations.

The preceding results show that all three problems are computationally difficult in general.
We study these problems under one of the most basic structural restrictions: the underlying graph is a tree, or, for $d$-matching, the input bipartite graph is a tree.
This restriction is natural because maximum matching in static trees is solvable by elementary dynamic programming.
Our hardness results show that, for temporal matchings, tree structure does not by itself overcome the difficulty caused by time-labels and temporal compatibility rules.
On the positive side, the same tree structure supports polynomial-time algorithms in restricted cases and approximation schemes in general.

\subsection{Our results}

Our first main result strengthens the known hardness of maximum $\Delta$-matching on temporal trees.
We prove that maximum $\Delta$-matching is NP-hard on temporal trees for every $\Delta\geq 2$, even if every edge appears at most twice.
This is a strong restriction on the time-label sets, since each static edge contributes at most two time edges.
Using a reduction between the two temporal models, we also prove that maximum $\gamma$-matching is NP-hard on temporal trees, even if each edge admits at most two $\gamma$-edges.
A new reduction from $d$-matching to $\gamma$-matching also implies that maximum $\gamma$-matching is APX-hard even when the underlying graph is bipartite.

We then identify additional restrictions under which temporal trees admit polynomial-time algorithms.
If every edge of the temporal tree appears exactly once, then maximum $\Delta$-matching can be solved in polynomial time.
The corresponding case for $\gamma$-matching, where each edge admits at most one $\gamma$-edge, is also polynomial-time solvable.
We further give a dynamic-programming algorithm for instances in which some optimal solution selects only a bounded number of time edges incident to each vertex.
Finally, using the connection between $d$-matchings and $\Delta$-matchings, we prove that maximum $d$-matching is polynomial-time solvable when the input bipartite graph is a tree.

The hardness results rule out exact polynomial-time algorithms in full generality, unless $\mathrm{P}=\mathrm{NP}$.
Nevertheless, we prove that maximum $\Delta$-matching on temporal trees admits a polynomial-time approximation scheme.
For every fixed $\varepsilon>0$, the algorithm returns a $(1-\varepsilon)$-approximate solution in polynomial time.
It partitions the time axis into periodically repeated windows separated by gaps of length $\Delta-1$, solves the problem exactly inside the windows, and uses a counting argument over all shifts of the window pattern.
The same approach, together with the reduction from $\gamma$-matching to $\Delta$-matching, gives a polynomial-time approximation scheme for maximum $\gamma$-matching on temporal trees.

The paper is organized as follows.
Section~\ref{sec:definitions-section} introduces temporal graphs and the matching models, records the reductions between the three problems, and derives the bipartite APX-hardness consequence for $\gamma$-matching.
Section~\ref{sec:multiple-edge-subsection} proves the hardness results.
Section~\ref{sec:single-edge-subsection} contains the polynomial-time solvable cases.
Section~\ref{sec:ptas-subsection} proves the approximation results.

\section{Temporal graphs and matching models}\label{sec:definitions-section}

All graphs considered in the paper are finite and simple.
For $k\in\mathbb{Z}_{\geq 0}$, let $[k]=\{1,\dots,k\}$, so $[0]=\emptyset$.
For integers $a,b$, let $[a,b]=\{z\in\mathbb{Z}:a\leq z\leq b\}$; in particular, $[a,b]=\emptyset$ if $b<a$.
Running times are stated under the usual unit-cost arithmetic convention.
We next introduce the temporal-graph notation used throughout the paper.
Our definition follows the foundational work of Kempe, Kleinberg, and Kumar~\cite{Temporal-def}.

\begin{definition}
A \textbf{temporal graph} $\mathcal{G}=(G,\lambda)$ consists of a finite static graph $G=(V,E)$ and a \textbf{time-labeling function} $\lambda$ that assigns a non-empty finite set $\lambda(e)\subseteq\mathbb{Z}_{>0}$ to each edge $e\in E$.
An edge $e\in E$ is \textbf{active} at time $\tau\in\mathbb{Z}_{>0}$ if $\tau\in\lambda(e)$.
If $e$ is active at time $\tau$, then the pair $(e,\tau)$ is called a \textbf{time edge}, and we say that $e$ \textbf{appears} at time $\tau$.
If $E\neq\emptyset$, the \textbf{lifetime} $\mathcal{T}(\mathcal{G})<\infty$ is the largest time-label assigned to an edge, that is,
$\mathcal{T}(\mathcal{G})=\max\bigcup_{e\in E}\lambda(e)$.
If $E=\emptyset$, we set $\mathcal{T}(\mathcal{G})=0$.
The elements of $[\mathcal{T}(\mathcal{G})]$ are called \textbf{time ticks}; in particular, if $E=\emptyset$, then there are no time ticks.
Some time ticks may have no active edge.
\end{definition}

\subsection{\texorpdfstring{$\Delta$-matchings}{Delta-matchings}}

The $\Delta$-model selects individual time edges, with a separation constraint at common endpoints.

\begin{definition}
Two time edges $(e,\tau)$ and $(e',\tau')$ are \textbf{$\boldsymbol{\Delta}$-independent} if the edges $e,e'$ do not share an endpoint or their time ticks differ by at least $\Delta$, that is, $|\tau-\tau'|\geq\Delta$.
A \textbf{$\boldsymbol{\Delta}$-matching} of a temporal graph $\mathcal{G}$ is a set of pairwise $\Delta$-independent time edges of $\mathcal{G}$.
Equivalently, a set $M$ of time edges is a $\Delta$-matching if, in every interval of $\Delta$ consecutive ticks, the time edges of $M$ appearing in that interval form a matching in the underlying graph.
\end{definition}

Thus a vertex cannot be matched more than once within any $\Delta$ consecutive ticks.
Equivalently, one may view $\Delta$ as a recovery period after a vertex has been matched.

\begin{problem}
In the \textbf{maximum $\boldsymbol{\Delta}$-matching problem}, the input is a temporal graph $\mathcal{G}$ and a positive integer $\Delta$.
The goal is to find a $\Delta$-matching in $\mathcal{G}$ with the largest possible cardinality.
\end{problem}
\subsection{\texorpdfstring{$\gamma$-matchings}{gamma-matchings}}

The $\gamma$-model is the consecutive-use counterpart of the $\Delta$-model.
Here the basic object is not a single time edge, but a block of $\gamma$ consecutive appearances of one underlying edge.
Thus the model captures situations in which a match can be made only if the edge remains active throughout a fixed-length interval.

\begin{definition}
Let $\mathcal{G}=(G,\lambda)$ be a temporal graph with $G=(V,E)$.
For an edge $e\in E$ and a time tick $\tau\in\mathbb{Z}_{>0}$ such that
$[\tau,\tau+\gamma-1]\subseteq\lambda(e)$, define
$
  (e,\tau)_\gamma=\{(e,\tau') : \tau'\in[\tau,\tau+\gamma-1]\}.
$
We call $(e,\tau)_\gamma$ the \textbf{$\boldsymbol{\gamma}$-edge} of $e$ starting at time $\tau$.
Two $\gamma$-edges $(e,\tau)_\gamma$ and $(e',\tau')_\gamma$ are \textbf{$\boldsymbol{\gamma}$-independent} if the edges $e,e'$ do not share an endpoint or the intervals $[\tau,\tau+\gamma-1]$ and $[\tau',\tau'+\gamma-1]$ are disjoint.
A \textbf{$\boldsymbol{\gamma}$-matching} of $\mathcal{G}$ is a set of pairwise $\gamma$-independent $\gamma$-edges.
\end{definition}

\begin{problem}
In the \textbf{maximum $\boldsymbol{\gamma}$-matching problem}, the input is a temporal graph $\mathcal{G}$ and a positive integer $\gamma$.
The goal is to find a $\gamma$-matching in $\mathcal{G}$ with the largest possible cardinality.
\end{problem}

\subsection{\texorpdfstring{$d$-matchings}{d-matchings}}

The $d$-matching problem is static rather than temporal, but the ordered side of the input graph will play the role of a time axis in the reductions below.

\begin{problem}
In the \textbf{maximum $\boldsymbol{d}$-matching problem}, the input is a bipartite graph $G=(S,T,E)$ with $S=\{s_1,\dots,s_n\}$ and a positive integer $d$.
The goal is to find a subset $M\subseteq E$ of maximum cardinality such that every vertex in $S$ is incident with at most one edge of $M$, and whenever $s_it,s_jt\in M$ for some $t\in T$ with $i\neq j$, we have $|i-j|\geq d$.
A set of edges satisfying these two conditions is called a \textbf{$\boldsymbol{d}$-matching}.
\end{problem}

When we speak about NP-hardness for these maximization problems, we mean the corresponding decision versions, asking whether there is a feasible solution of cardinality at least a prescribed integer.
For approximation algorithms, we use the maximization convention that a $\varrho$-approximation, where $0<\varrho\leq 1$, returns a feasible solution of value at least $\varrho$ times the optimum value.

\subsection{Relations between the models}\label{sec:relations-subsection}

We record several reductions between the three models.
Some of them will be used later, while the reduction from $d$-matching to $\gamma$-matching also gives an immediate hardness consequence for $\gamma$-matching.
The first reduction translates maximum $\gamma$-matching into maximum $\Delta$-matching.

\begin{observation}\label{obs:gamma-to-delta}
Maximum $\gamma$-matching reduces in polynomial time to maximum $\Delta$-matching with ${\Delta=\gamma}$.
The reduction preserves optimum value.
\end{observation}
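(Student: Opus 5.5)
The plan is to encode each $\gamma$-edge as a single time edge, so that $\Delta$-independence with $\Delta=\gamma$ coincides with $\gamma$-independence. Given $\mathcal{G}=(G,\lambda)$ and $\gamma$, I define $\mathcal{G}'=(G',\lambda')$ on the same underlying graph, with
\[
\lambda'(e)=\{\tau : [\tau,\tau+\gamma-1]\subseteq\lambda(e)\},
\]
so that time edges $(e,\tau)$ of $\mathcal{G}'$ are in bijection with $\gamma$-edges $(e,\tau)_\gamma$ of $\mathcal{G}$. Edges with $\lambda'(e)=\emptyset$ admit no $\gamma$-edge, so they are simply deleted from $G'$. This respects the requirement that label sets be non-empty and does not change optimum values. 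The underlying graph of $\mathcal{G}'$ is a subgraph of $G$, so trees become forests. If later sections need a tree, I can instead keep such edges with a single dummy label placed far beyond the lifetime, say $\mathcal{T}(\mathcal{G})+\gamma\cdot(|E|+1)$ with distinct spacing; but for this observation, deletion suffices. Computing $\lambda'$ takes time polynomial in $\sum_e|\lambda(e)|$, since each start $\tau\in\lambda(e)$ is checked against at most $\gamma$ consecutive labels. The check is only needed when $\gamma\le|\lambda(e)|$, so the work stays polynomial even though $\gamma$ is given in binary.

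The key step is the equivalence of the two compatibility notions. For starts $\tau,\tau'$, the intervals $[\tau,\tau+\gamma-1]$ and $[\tau',\tau'+\gamma-1]$ are disjoint if and only if $|\tau-\tau'|\ge\gamma$. This holds because two integer intervals of equal length $\gamma$ intersect exactly when their left endpoints differ by at most $\gamma-1$. The endpoint-sharing condition is identical in both definitions, since the underlying edges are the same. One more point needs checking: two distinct $\gamma$-edges of the same edge $e$ share endpoints, so they must have disjoint intervals. This matches the requirement $|\tau-\tau'|\ge\Delta$ for two time edges on the same edge.

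Therefore the bijection maps $\gamma$-matchings of $\mathcal{G}$ exactly onto $\gamma$-matchings (in the $\Delta$ sense, with $\Delta=\gamma$) of $\mathcal{G}'$ and preserves cardinality. It follows that the optimum values coincide and that any solution transfers back in polynomial time. I do not expect a real obstacle. The only delicate points are the bookkeeping for edges with no $\gamma$-edge and confirming that the encoding size stays polynomial.
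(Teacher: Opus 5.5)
Your proposal is correct and follows the paper's proof: keep only the edges that admit a $\gamma$-edge, put one time edge at each start tick $\tau$ with $[\tau,\tau+\gamma-1]\subseteq\lambda(e)$, and use the fact that two intervals of length $\gamma$ are disjoint exactly when their starts differ by at least $\gamma$. Drop your aside about keeping $\gamma$-edge-free edges with far-away dummy labels. Each such dummy time edge is $\Delta$-independent of every genuine time edge, so it can always be added and would raise the optimum rather than preserve it.
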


\begin{proof}
Consider an instance of the maximum $\gamma$-matching problem.
If no edge admits a $\gamma$-edge, then the empty matching is optimal.
Otherwise, keep only those edges that admit at least one $\gamma$-edge.
For each remaining edge $e$ and each interval $[\tau,\tau+\gamma-1]\subseteq\lambda(e)$, introduce one time edge of $e$ at time tick $\tau$, and set $\Delta=\gamma$.
Selecting this introduced time edge is equivalent to selecting the $\gamma$-edge $(e,\tau)_\gamma$.
Moreover, two $\gamma$-edges conflict exactly when the corresponding introduced time edges conflict in the $\Delta$-matching instance.
Thus feasible solutions correspond one-to-one and cardinalities are preserved.
\end{proof}

We will later use the following restricted reduction in the opposite direction.
The restriction ensures that, after each time edge is replaced by an interval of length $\Delta$, no additional $\gamma$-edges are created on the same underlying edge.

\begin{lemma}\label{lem:delta-to-gamma}
Consider the maximum $\Delta$-matching problem, where the underlying graph is $G=(V,E)$, the input temporal graph is $\mathcal{G}=(G,\lambda)$, and for every edge $e\in E$, $|x-y|>\Delta$ holds for every distinct $x,y\in\lambda(e)$.
In this case, the maximum $\Delta$-matching problem can be reduced to the maximum $\gamma$-matching problem.
\end{lemma}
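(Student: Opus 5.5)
The plan is to invert the construction of Observation~\ref{obs:gamma-to-delta}. Given $\mathcal{G}=(G,\lambda)$ satisfying the separation hypothesis, define a new time-labeling $\lambda'$ on the same underlying graph $G$ by
\[
\lambda'(e)=\bigcup_{x\in\lambda(e)}[x,x+\Delta-1],
\]
and set $\gamma=\Delta$. Each time edge $(e,x)$ is thus replaced by a block of $\Delta$ consecutive appearances, which automatically yields the $\gamma$-edge $(e,x)_\gamma$. The construction is clearly polynomial in the size of the input, since each label is expanded into $\Delta$ labels.

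The key step is to show that the $\gamma$-edges of $(G,\lambda')$ are exactly the sets $(e,x)_\gamma$ with $x\in\lambda(e)$, i.e.\ that the expansion creates no spurious $\gamma$-edges. For distinct $x<y$ in $\lambda(e)$ the hypothesis gives $y\geq x+\Delta+1$. Hence the intervals $[x,x+\Delta-1]$ and $[y,y+\Delta-1]$ are not only disjoint but also separated by at least one tick $x+\Delta\notin\lambda'(e)$. So $\lambda'(e)$ is a disjoint union of maximal runs, each of length exactly $\Delta$. A window $[\tau,\tau+\Delta-1]\subseteq\lambda'(e)$ consists of $\Delta$ consecutive ticks, so it must lie inside a single maximal run, and therefore it coincides with that run, giving $\tau=x$ for some $x\in\lambda(e)$. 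This is the step I expect to require care, and it is precisely where the strict inequality $|x-y|>\Delta$ is used: with $|x-y|=\Delta$ the two runs would merge into one run of length $2\Delta$, and shifted windows would create new $\gamma$-edges.

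Finally, we verify that the compatibility relations agree. Take time edges $(e,x)$ and $(e',y)$ with $e\neq e'$. If $e$ and $e'$ are disjoint, both pairs are independent in their respective models. If they share an endpoint, then the intervals $[x,x+\Delta-1]$ and $[y,y+\Delta-1]$ are disjoint if and only if $|x-y|\geq\Delta$, which is exactly the $\Delta$-independence condition. For two time edges on the same edge $e$, the hypothesis gives $|x-y|>\Delta$, so they are always $\Delta$-independent, and the corresponding intervals are disjoint, so they are always $\gamma$-independent. Thus the map $(e,x)\mapsto(e,x)_\gamma$ is a bijection between time edges and $\gamma$-edges that preserves both independence relations. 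Consequently, $\Delta$-matchings of $\mathcal{G}$ correspond one-to-one to $\gamma$-matchings of $(G,\lambda')$, with cardinalities preserved. This gives the reduction, with the same optimum value and the same underlying graph, so it also applies to temporal trees.
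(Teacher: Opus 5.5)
Your proposal is correct and follows the paper's proof essentially verbatim: the same expansion $\lambda'(e)=\bigcup_{x\in\lambda(e)}[x,x+\Delta-1]$ with $\gamma=\Delta$, the same use of the strict separation $|x-y|>\Delta$ to rule out extra $\gamma$-edges, and the same observation that two length-$\Delta$ intervals are disjoint exactly when their starting ticks differ by at least $\Delta$. Your maximal-run argument and your explicit check of pairs on the same edge spell out steps the paper states only briefly.
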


\begin{proof}
Set $\gamma=\Delta$ and construct a temporal graph $\mathcal{G}'=(G,\lambda')$ as follows.
For every edge $e\in E$, write $\lambda(e)=\{\tau_1,\dots,\tau_k\}$, and set $\lambda'(e)=\bigcup_{i\in[k]}[\tau_i,\tau_i+\gamma-1]$.
Since distinct time-labels of $e$ in the original instance differ by more than $\Delta=\gamma$, the intervals $[\tau_i,\tau_i+\gamma-1]$ are pairwise disjoint and separated by at least one tick.
Therefore the only $\gamma$-edges of $e$ in $\mathcal{G}'$ are $(e,\tau_i)_\gamma$ for $i\in[k]$.
Selecting the time edge $(e,\tau_i)$ in the original instance is equivalent to selecting the $\gamma$-edge $(e,\tau_i)_\gamma$ in the new instance.
Because two intervals of length $\gamma$ are disjoint exactly when their starting ticks differ by at least $\gamma$, this correspondence preserves independence and hence feasibility.
It also preserves cardinality, giving the desired reduction.
\end{proof}

The following remark expresses the $d$-matching condition in a form parallel to the equivalent formulation of $\Delta$-matchings above.

\begin{remark}\label{rem:d-ekv-def}
In the bipartite graph $G=(S,T,E)$, a set $M\subseteq E$ is a $d$-matching if and only if, after restricting $M$ to any interval of at most $d$ consecutive vertices of $S$, the remaining edges form a matching.
Here, restricting $M$ to a set $X\subseteq S$ means keeping the edges of $M$ incident with vertices of $X$.
\end{remark}

This viewpoint gives a direct reduction from maximum $d$-matching to maximum $\gamma$-matching.

\begin{observation}\label{obs:d-to-gamma}
Maximum $d$-matching reduces in polynomial time to maximum $\gamma$-matching with $\gamma=d$.
The reduction preserves optimum value and produces a temporal graph whose underlying graph is bipartite.
\end{observation}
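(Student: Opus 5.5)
Given an instance $G=(S,T,E)$ with $S=\{s_1,\dots,s_n\}$ and parameter $d$, we build a temporal graph whose underlying graph is $G$ itself, which is bipartite. The ordering of $S$ serves as the time axis. Each edge $s_it$ is given the label set $\lambda(s_it)=[i,i+d-1]$, and we set $\gamma=d$. Every edge then admits exactly one $\gamma$-edge, namely $(s_it,i)_\gamma$, because its label set is a single interval of length exactly $\gamma$. This gives a bijection between the edges of $G$ and the $\gamma$-edges of the new instance, and the construction takes polynomial time (the labels are polynomial in $n$ and $d$, and $d\le n$ may be assumed without loss of generality).

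It remains to show that a set $M\subseteq E$ is a $d$-matching if and only if the corresponding set of $\gamma$-edges is a $\gamma$-matching. Since $G$ is bipartite, two distinct edges share an endpoint in exactly one of two ways.

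\emph{Shared endpoint $s_i\in S$.} The two edges are $s_it$ and $s_it'$. Their $\gamma$-edges both start at tick $i$, so their intervals overlap and they are not $\gamma$-independent. This matches the $d$-matching requirement that $s_i$ be covered at most once.

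\emph{Shared endpoint $t\in T$.} The two edges are $s_it$ and $s_jt$ with $i\neq j$. The intervals $[i,i+d-1]$ and $[j,j+d-1]$ are disjoint exactly when $|i-j|\geq d$, which is precisely the $d$-matching condition at $t$.

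Edges with no common endpoint are compatible in both models. Hence feasibility corresponds exactly under the bijection, cardinalities are preserved, and the optimum value is preserved.

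The one point needing care is that the label set of each edge must not accidentally create additional $\gamma$-edges. This holds here because every label set is exactly one interval of length $\gamma$. The proof does not need Lemma~\ref{lem:delta-to-gamma}, since each edge gets its own interval. Remark~\ref{rem:d-ekv-def} gives an equivalent way to see the same correspondence: windows of $d$ consecutive $S$-vertices correspond to overlapping time intervals.
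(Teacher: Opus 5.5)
Your proposal is correct and matches the paper's proof: it uses the same labeling $\lambda(s_it)=[i,i+d-1]$ with $\gamma=d$, the same observation that each edge yields exactly one $\gamma$-edge $(s_it,i)_\gamma$, and the same two-case check for a shared endpoint in $S$ and in $T$. Your extra remark that one may assume $d\le n$, to keep the label sets polynomially large, is a harmless refinement the paper does not make, though when $d>n$ it technically replaces $\gamma=d$ by $\gamma=\min\{d,n\}$.
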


\begin{proof}
Let the input of the maximum $d$-matching problem be $G=(S,T,E)$, where $S=\{s_1,\dots,s_n\}$, and let $d\in\mathbb{Z}_{>0}$.
Construct a temporal graph $\mathcal{G}=(G',\lambda)$ with underlying graph $G'=(S\cup T,E)$.
For every edge $s_it\in E$, set $\lambda(s_it)=[i,i+d-1]$, and set $\gamma=d$.
Then every edge $s_it$ of the original instance gives exactly one $\gamma$-edge, namely $(s_it,i)_\gamma$.
Two chosen edges incident with the same vertex of $S$ give overlapping $\gamma$-edges, while two chosen edges $s_it$ and $s_jt$ incident with the same vertex of $T$ give independent $\gamma$-edges exactly when $|i-j|\geq d$.
Thus feasible solutions correspond one-to-one to $d$-matchings, and cardinalities are preserved.
\end{proof}

By Observation~\ref{obs:d-to-gamma} and the APX-hardness of the maximum $d$-matching problem~\cite{Madarasi2}, we obtain the following consequence.

\begin{corollary}
The maximum $\gamma$-matching problem is APX-hard, even if the underlying graph of the input temporal graph is bipartite.
\end{corollary}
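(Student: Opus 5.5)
The plan is to compose the reduction of Observation~\ref{obs:d-to-gamma} with the APX-hardness of maximum $d$-matching from~\cite{Madarasi2}. Since APX-hardness transfers along approximation-preserving reductions, the main task is to check that the map of Observation~\ref{obs:d-to-gamma} is such a reduction (an L-reduction, or even a strict reduction), and that it keeps the underlying graph bipartite.

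First, fix an instance $(G,d)$ of maximum $d$-matching with $G=(S,T,E)$, from the class on which~\cite{Madarasi2} proves APX-hardness (unweighted cardinality version). Apply the construction of Observation~\ref{obs:d-to-gamma}: the underlying graph is $G'=(S\cup T,E)$, the labels are $\lambda(s_it)=[i,i+d-1]$, and $\gamma=d$. This is computable in polynomial time. The instance has size polynomial in $|E|+n+d$, and we may assume $d\le n$, since for $d>n$ the instance is equivalent to one with $d=n$. The underlying graph $G'$ is exactly $G$, so it is bipartite.

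Second, verify the L-reduction conditions with constants $\alpha=\beta=1$. The observation already gives a cardinality-preserving bijection between $d$-matchings of $G$ and $\gamma$-matchings of $\mathcal{G}$, so the two optimum values are equal. The bijection is explicit in both directions: a $\gamma$-edge $(s_it,i)_\gamma$ maps back to $s_it$. Hence any $\gamma$-matching $M'$ can be converted in polynomial time into a $d$-matching $M$ with $|M|=|M'|$. It follows that a $\varrho$-approximation for maximum $\gamma$-matching restricted to bipartite underlying graphs yields a $\varrho$-approximation for maximum $d$-matching. In particular, a PTAS for the former would give a PTAS for the latter, which is impossible unless $\mathrm{P}=\mathrm{NP}$ by~\cite{Madarasi2}.

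The only potential obstacle is bookkeeping rather than mathematics. The hardness result of~\cite{Madarasi2} must concern exactly the non-cyclic, unweighted $d$-matching problem with $b\equiv1$, and the reduction must be polynomial in the input encoding. The latter holds because the label sets have total size $|E|\cdot d$ and $d\le n$.
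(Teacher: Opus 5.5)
Your proposal is correct and follows the same route as the paper, which proves the corollary by composing Observation~\ref{obs:d-to-gamma} with the APX-hardness of maximum $d$-matching from~\cite{Madarasi2}. Your verification that the reduction is strict (equal optima, cardinality-preserving back-translation) fills in details the paper leaves implicit. So does your capping of $d$ at $n$, which keeps the label sets of polynomial size.
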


Finally, we will use the following reduction from $d$-matching to the special case of $\Delta$-matching in which every edge appears exactly once.

\begin{lemma}\label{lem:d-delta-cor}
The maximum $d$-matching problem is a special case of the maximum $\Delta$-matching problem, in which every edge appears exactly once.
\end{lemma}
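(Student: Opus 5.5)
Given an instance $G=(S,T,E)$ with $S=\{s_1,\dots,s_n\}$ and $d\in\mathbb{Z}_{>0}$, we construct a temporal graph $\mathcal{G}=(G,\lambda)$ on the same underlying graph. Each edge $s_it\in E$ receives the single time-label $\lambda(s_it)=\{i\}$, so every edge appears exactly once. We set $\Delta=d$. The time edges of $\mathcal{G}$ are then exactly the pairs $(s_it,i)$, and the map $s_it\mapsto(s_it,i)$ is a bijection between $E$ and the set of time edges. It therefore suffices to check that a set $M\subseteq E$ is a $d$-matching if and only if its image $M'$ is a $\Delta$-matching. Since the bijection preserves cardinality, the optimum values then coincide, and the construction is clearly polynomial.

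We verify this equivalence pair by pair. Two distinct edges of $M$ can only conflict if they share an endpoint.

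\emph{Shared endpoint in $S$.} If they share a vertex $s_i\in S$, they are $s_it$ and $s_it'$, and both have time tick $i$. Their tick difference is $0<\Delta$, so they are not $\Delta$-independent. This matches the $d$-matching requirement that each $s_i$ be covered at most once.

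\emph{Shared endpoint in $T$.} If they share $t\in T$, they are $s_it$ and $s_jt$ with $i\neq j$. Their time edges are $\Delta$-independent exactly when $|i-j|\geq\Delta=d$, which is precisely the $d$-matching distance condition.

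\emph{No shared endpoint.} Edges sharing no endpoint impose no constraint in either model.

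Since both notions are defined by pairwise conditions, $M$ is a $d$-matching if and only if $M'$ is a $\Delta$-matching. Note that $\lambda$ takes values in $\mathbb{Z}_{>0}$, as the definition of a temporal graph requires, and that the resulting underlying graph is the input bipartite graph itself. In particular, if that graph is a tree, the temporal instance is a temporal tree with single appearances; this is how the lemma will be used later.

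The only step requiring care is the case of two edges meeting at $S$. Here the equal time ticks must be what forbids them, which relies on $\Delta\geq 1$. This holds because $d$ is a positive integer.
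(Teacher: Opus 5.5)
Your proof is correct and follows the paper's argument: the same construction $\lambda(s_it)=\{i\}$ with $\Delta=d$, and the same case split on whether two chosen edges share an endpoint in $S$ (equal ticks, so never compatible) or in $T$ ($\Delta$-independent exactly when $|i-j|\geq d$). Your extra remarks, that $d\geq 1$ is what makes equal ticks conflict and that the labels lie in $\mathbb{Z}_{>0}$, are details the paper leaves implicit.
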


\begin{proof}
Let the input of the maximum $d$-matching problem be $G=(S,T,E)$, where $S=\{s_1,\dots,s_n\}$, and let $d\in\mathbb{Z}_{>0}$.
Construct a temporal graph $\mathcal{G}=(G,\lambda)$ on the same underlying graph by setting $\lambda(s_it)=\{i\}$ for every edge $s_it\in E$, and set $\Delta=d$.
Two selected edges incident with the same vertex of $T$ are $\Delta$-independent exactly when their indices in $S$ differ by at least $d$.
Two selected edges incident with the same vertex of $S$ are never compatible, because they appear at the same time tick.
Thus feasible $d$-matchings in $G$ are in one-to-one correspondence with $\Delta$-matchings in $\mathcal{G}$, and the correspondence preserves cardinality.
Every edge in the constructed temporal graph appears exactly once.
\end{proof}

\section{Hardness of the problems}\label{sec:multiple-edge-subsection}
The maximum $\Delta$-matching problem is known to be NP-hard on temporal trees~\cite{Delta1}.
We strengthen this result by showing that, for every $\Delta\geq 2$, the problem remains hard even if every edge appears at most twice.
We then derive the corresponding hardness result for the maximum $\gamma$-matching problem on trees, under the analogous restriction that each edge admits at most two $\gamma$-edges.

The reduction for $\Delta$-matching starts from the \emph{double matching problem} introduced in~\cite{Madarasi2}.
An instance of this problem is a bipartite graph $G=(S,T,E)$ together with two subsets $S_1,S_2\subseteq S$ such that $S_1\cup S_2=S$.
For each $i\in\{1,2\}$, let $E_i$ denote the set of edges between $S_i$ and $T$.
The goal is to find a maximum-cardinality set $M\subseteq E$ such that both $M\cap E_1$ and $M\cap E_2$ are matchings.

We need the following restricted form of double matching, in which every vertex in $S$ has degree exactly $2$.
This form follows from the proof of Theorem~3 in~\cite{Madarasi2}, because the instances constructed there already have this degree property.

\begin{lemma}\label{lem:double-matching}
The double matching problem is NP-hard to $\varrho$-approximate for every $\varrho>\frac{949}{950}$, even if every vertex $s\in S$ has degree exactly $2$.
Consequently, the double matching problem remains NP-hard under this restriction.
\end{lemma}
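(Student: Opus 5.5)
The plan is to obtain the statement by inspecting the reduction used in the proof of Theorem~3 in~\cite{Madarasi2}, rather than by building a new reduction. That theorem proves APX-hardness of the double matching problem with the explicit inapproximability threshold $\frac{949}{950}$. It does so via a gap-preserving reduction from a bounded-occurrence constraint satisfaction problem, such as a bounded-degree vertex cover or a bounded-occurrence 3-SAT/MAX-2SAT variant with a known explicit gap. So the first step is to recall that construction and record what the $S$-side gadgets look like.

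The key structural step is to verify that in the constructed bipartite graph $G=(S,T,E)$ every vertex $s\in S$ is adjacent to exactly two vertices of $T$. I expect this to hold because of the design. In double matching, a vertex of $S$ lies in $S_1$, in $S_2$, or in both. Its role in the gadget is to create a conflict between two $T$-vertices, selecting one of two alternatives, so it is naturally modeled with two incident edges.

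If some vertex $s$ in the original construction had degree other than $2$, I would handle it with local modifications that change neither feasibility nor optimum value.
\begin{itemize}
\item If $\deg(s)=0$, delete $s$.
\item If $\deg(s)=1$, attach a fresh private vertex $t_s\in T$ adjacent only to $s$. Then decide whether the edge $st_s$ is useless, which happens when $s$ lies in both $S_1$ and $S_2$ and so can be matched at most once overall.
\end{itemize}
The delicate point is that padding with a private $T$-vertex can increase the optimum by a fixed amount. I would therefore check that this additive shift is linear in the instance size with a controlled constant, so that the gap is still preserved. This is why I would first try to confirm directly that the original construction already has degree exactly $2$, as the paper asserts, and only fall back on padding otherwise.

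Once the degree property is established, the inapproximability bound transfers verbatim: the restricted instances are exactly those produced by the reduction, so approximating within any $\varrho>\frac{949}{950}$ on them is NP-hard. The "consequently" part is then immediate. An exact polynomial-time algorithm is a $1$-approximation, and $1>\frac{949}{950}$, so NP-hardness of the (decision version of the) exact problem follows under the same restriction.

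The main obstacle is the bookkeeping in the first step. It consists of faithfully reconstructing the gadgets of~\cite{Madarasi2} and confirming that no $S$-vertex has degree $1$ or at least $3$, including boundary or auxiliary vertices that might have been added for technical reasons.
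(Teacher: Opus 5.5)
Your plan matches the paper, which gives no separate argument and only notes that the instances built in the proof of Theorem~3 of~\cite{Madarasi2} already have every vertex of $S$ of degree exactly $2$; the $\frac{949}{950}$ threshold and, via the observation that an exact algorithm is a $1$-approximation, the NP-hardness then carry over verbatim. Drop the padding fallback, however, since it would not work if it were needed. Every $s\in S$ is matched at most once in any double matching, because $s\in S_1\cup S_2$ forces all edges at $s$ into $E_1$ or all into $E_2$. So a private neighbour $t_s$ makes the original edge of $s$ pointless, and the padded instance has optimum $\mathrm{OPT}(G-s)+1$, which is not a fixed shift of $\mathrm{OPT}(G)$. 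Your fallback also says nothing about $S$-vertices of degree at least $3$.
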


We now reduce this restricted double matching problem to the maximum $\Delta$-matching problem on temporal trees in which every edge appears at most twice.
\begin{theorem}\label{thm:Delta-m-on-trees}
For every $\Delta\geq 2$, the maximum $\Delta$-matching problem is NP-hard, even if the underlying graph is a tree and every edge appears at most twice.
\end{theorem}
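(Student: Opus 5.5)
The plan is to reduce from the restricted double matching problem of Lemma~\ref{lem:double-matching}. We may assume every $s\in S$ has exactly two neighbours $t_s,t'_s\in T$. The reduction should be gap-preserving up to an additive constant: a double matching of size $k$ should correspond to a $\Delta$-matching of size $k+C$ for an explicitly computable $C$, and conversely. NP-hardness of the decision version then follows.

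\textbf{Temporal encoding on the original graph.} I would first express the double matching constraints purely through time ticks. Give every edge $st$ a single label depending only on the type of $s$:
\begin{itemize}
\item label $1$ if $s\in S_1\setminus S_2$;
\item label $\Delta+1$ if $s\in S_2\setminus S_1$;
\item label $2$ if $s\in S_1\cap S_2$.
\end{itemize}
Since $\Delta\ge 2$, we have $|2-1|<\Delta$ and $|2-(\Delta+1)|=\Delta-1<\Delta$, while $|1-(\Delta+1)|=\Delta$. Hence at a vertex $t$, two selected edges coexist exactly when one comes from $S_1\setminus S_2$ and the other from $S_2\setminus S_1$. This is precisely the requirement that $M\cap E_1$ and $M\cap E_2$ be matchings at $t$. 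At a vertex $s$, its two edges share a tick, so at most one of them is chosen, as required. This settles the time semantics. The remaining task is to remove the cycles of $G$ while using at most two labels per edge.

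\textbf{Making the underlying graph a tree.} I would break cycles by splitting each $s$ into two copies $s^{t_s}$ and $s^{t'_s}$, pendant at $t_s$ and $t'_s$ respectively. I would then join all vertices of $T$ to a new root $r$, so that the underlying graph is a spider-like tree. The coupling ``at most one copy of $s$ is used'' must then be enforced through time rather than through adjacency. To do this, each $s$ receives a private time block $B_s$, and the blocks are pairwise separated by gaps much larger than $\Delta$. The label scheme above is replicated inside each block, shifted by the block's offset.

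The edges $rt$ (and possibly private pendant edges at $t$ and at the copies) receive at most two labels each. These labels are chosen so that the ``free'' $\Delta$-matching on the root and pendant edges contributes a fixed amount $C$. Using both copies of some $s$ inside $B_s$ would forfeit one unit of that free amount, so it gives no gain. The capacity of $t$ across different blocks must also be coupled. Here I would use the $rt$ edge as a token that is displaced whenever $t$ is matched in an $S_1$ slot or an $S_2$ slot.

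\textbf{Correctness and main obstacle.} The verification has two directions. A double matching lifts directly to a $\Delta$-matching of value $k+C$. For the converse, an exchange argument normalises any $\Delta$-matching so that the root and pendant edges are used canonically, and then reads off a double matching. I expect the main obstacle to be the second step: designing gadget labels so that the consistency of the two copies of each $s$, and the per-$t$ slot capacity across all blocks, are both forced with only two appearances per edge. The difficulty is that a vertex $t$ lies in many blocks while $rt$ may appear only twice. If a single root does not suffice, I would try replacing $rt$ by a path of fresh vertices, one per block, each edge labelled twice, that propagates a ``used/unused'' state from block to block. The standard chain-propagation argument for paths from~\cite{Delta1} would then be adapted to two labels per edge.
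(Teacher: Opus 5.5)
\paragraph{Verdict: there is a genuine gap.} Your first step is correct. With labels $1,2,\Delta+1$, the $S_1\setminus S_2$ and $S_2\setminus S_1$ slots are compatible and each conflicts with the $S_1\cap S_2$ slot. This is the same slot pattern the paper uses inside each block, with offsets $0,\Delta-1,2(\Delta-1)$. However, that step produces a bipartite graph with single appearances, not a tree. The passage to a tree with at most two labels per edge is the whole content of the theorem, and your proposal does not achieve it.

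\paragraph{Why the tree construction fails.} Once each $s$ gets a private block $B_s$, with blocks separated by much more than $\Delta$, the edges at a fixed $t$ coming from different $s$ lie in different blocks. They are therefore automatically $\Delta$-independent. So the constraint at $t$, which your first step enforced exactly through shared ticks at $t$, disappears. Restoring it would need a gadget at $t$ that remembers, across all $\deg(t)$ distant blocks, whether $t$ was already used in an $S_1$-slot or an $S_2$-slot. That is a global ``at most one among many'' constraint over unboundedly many distant times, built from edges with at most two labels.
\begin{itemize}
\item The token $rt$ can touch at most two blocks.
\item Your per-$s$ penalty via root edges would already need $rt$ to carry a label in every block $B_s$ with $s\in N(t)$.
\item Parity-chain gadgets naturally penalize a single pair of uses, not an at-most-one constraint over many uses.
\end{itemize}
So as it stands there is no reduction, and the path-propagation fallback is not a construction.

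\paragraph{The missing idea: put time on the other side.} Collapse all of $T$ into one centre vertex $v$. Give each $t_j$ its own block of ticks at $v$ containing the three slots; the paper uses blocks of length $3(\Delta-1)+1$. Represent each $s_i$ by a leaf $w_i$, and label the edge $vw_i$ with the two slots of the two neighbours of $s_i$. This is exactly where ``degree exactly $2$'' is used, and why two appearances suffice.

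Now every $T$-side constraint is enforced exactly by ticks at the single vertex $v$. The only relaxed constraint is the purely local ``not both appearances of $vw_i$''. Its two labels $x<y$ are more than $\Delta$ apart, so they do not conflict by themselves.

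This local constraint is enforced by a star gadget at $w_i$, or at an auxiliary vertex $\widehat w_i$:
\begin{itemize}
\item its single-label leaf edges form an arithmetic progression with step $\Delta-1$ spanning from $x$ to $y$;
\item there is a case split on the parity of $\ell$ and the remainder $m$ in $y-x=\ell(\Delta-1)+m$;
\item using both appearances costs one gadget edge, while the gadget optimum is the same whether zero or exactly one appearance (either one) is used.
\end{itemize}
A normalization lemma and an exchange argument then show that the core restriction of an optimum is an optimal double matching.
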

\begin{proof}
Let $G=(S,T,E)$, $S_1$, and $S_2$ be an instance of the restricted double matching problem from Lemma~\ref{lem:double-matching}.
It is enough to transform such an instance, for an arbitrary fixed $\Delta\geq 2$, into an instance of maximum $\Delta$-matching on a temporal tree.

We first construct an auxiliary bipartite graph $G'$, which will determine the time-labels in the temporal graph $\mathcal{G}=(\widehat{G},\lambda)$ constructed later.
\begin{figure}
\centering
\begin{tikzpicture}[
  vertex/.style={circle, fill=black, inner sep=0pt, minimum size=5pt},
  blackedge/.style={thick}
]

\node[vertex, label=above:{$s_1$}] (ls1) at (0,2) {};
\node[vertex, label=above:{$s_2$}] (ls2) at (1.5,2) {};
\node[vertex, label=above:{$s_3$}] (ls3) at (3,2) {};

\node[vertex, label=below:{$t_1$}] (lt1) at (0.9,0) {};
\node[vertex, label=below:{$t_2$}] (lt2) at (2.1,0) {};

\draw[decorate,decoration={brace,mirror,amplitude=8pt}]
  ($(ls2)+(0.45,0.45)$) -- ($(ls1)+(-0.45,0.45)$)
  node[midway, yshift=0.8cm, draw=none, fill=none] {\large $S_1$};

\draw[decorate,decoration={brace,mirror,amplitude=8pt}]
  ($(ls3)+(0.45,0.85)$) -- ($(ls2)+(-0.45,0.85)$)
  node[midway, yshift=0.8cm, draw=none, fill=none] {\large $S_2$};

\draw[blackedge] (lt1) -- (ls1);
\draw[blackedge] (lt2) -- (ls1);
\draw[blackedge] (lt1) -- (ls2);
\draw[blackedge] (lt2) -- (ls2);
\draw[blackedge] (lt1) -- (ls3);
\draw[blackedge] (lt2) -- (ls3);

\node[vertex, label=above:{$s_1$}] (rs1) at (8.7,2) {};
\node[vertex, label=above:{$s_2$}] (rs2) at (10.2,2) {};
\node[vertex, label=above:{$s_3$}] (rs3) at (11.7,2) {};

\node[vertex, label=below:{$t_1^1$}]      (rt11) at (6.0,0) {};
\node[vertex, label=below:{$t_1^2$}]      (rt12) at (7.2,0) {};
\node[vertex, label=below:{$t_1^3$}]      (rt13) at (8.4,0) {};
\node[vertex, label=below:{$d_1^{3,1}$}]  (rd1)  at (9.6,0) {};
\node[vertex, label=below:{$t_2^1$}]      (rt21) at (10.8,0) {};
\node[vertex, label=below:{$t_2^2$}]      (rt22) at (12.0,0) {};
\node[vertex, label=below:{$t_2^3$}]      (rt23) at (13.2,0) {};
\node[vertex, label=below:{$d_2^{3,1}$}]  (rd2)  at (14.4,0) {};

\draw[decorate,decoration={brace,mirror,amplitude=8pt}]
  ($(rs2)+(0.45,0.45)$) -- ($(rs1)+(-0.45,0.45)$)
  node[midway, yshift=0.8cm, draw=none, fill=none] {\large $S_1$};

\draw[decorate,decoration={brace,mirror,amplitude=8pt}]
  ($(rs3)+(0.45,0.85)$) -- ($(rs2)+(-0.45,0.85)$)
  node[midway, yshift=0.8cm, draw=none, fill=none] {\large $S_2$};

\draw[blackedge] (rt11) -- (rs1);
\draw[blackedge] (rt21) -- (rs1);

\draw[blackedge] (rt12) -- (rs2);
\draw[blackedge] (rt22) -- (rs2);

\draw[blackedge] (rt13) -- (rs3);
\draw[blackedge] (rt23) -- (rs3);

\end{tikzpicture}
\caption{Example of the construction of $G'$ for $\Delta=2$ in the proof of Theorem~\ref{thm:Delta-m-on-trees}.
  The left graph is the input $G$ of the double matching problem, and the right graph is the constructed bipartite graph $G'$.}
\label{fig:elso}
\end{figure}
Fix orderings $(s_1,\dots,s_{|S|})$ of $S$ and $(t_1,\dots,t_{|T|})$ of $T$.
Let $G'=(S,T',E')$ be a bipartite graph, where $S$ is the same as in $G$.
For each $i\in[|T|]$, add to $T'$ the vertices $t_i^1,t_i^2,t_i^3$, the vertices $d_i^{a,j}$ for $a\in[2]$ and $j\in[\Delta-2]$, and the vertices $d_i^{3,j}$ for $j\in[\Delta-1]$.
Thus $T'$ contains exactly $(3(\Delta-1)+1)|T|$ vertices.
We order the vertices of $T'$ block by block, according to the ordering $t_1,\dots,t_{|T|}$ of $T$.
For each $i\in[|T|]$, the block associated with $t_i$ lists first $t_i^1$, then $d_i^{1,j}$ for $j\in[\Delta-2]$ in increasing order, then $t_i^2$, then $d_i^{2,j}$ for $j\in[\Delta-2]$ in increasing order, then $t_i^3$, and finally $d_i^{3,j}$ for $j\in[\Delta-1]$ in increasing order.
Let $(t'_1,\dots,t'_{(3(\Delta-1)+1)|T|})$ denote the resulting ordering of $T'$.
The edge set $E'$ is defined as follows.
For every edge $st_i\in E$, add exactly one edge to $E'$: add $st_i^1$ if $s\in S_1\setminus S_2$, add $st_i^2$ if $s\in S_1\cap S_2$, and add $st_i^3$ if $s\in S_2\setminus S_1$.
All vertices $d_i^{a,j}$ introduced above remain isolated.
An example of the construction of $G'$ is illustrated in Figure~\ref{fig:elso} for the case $\Delta=2$.

We now construct the temporal graph $\mathcal{G}=(\widehat{G},\lambda)$, whose underlying graph $\widehat{G}$ is a tree.
First, we define the \emph{core} of $\widehat{G}$ to be the star with center $v$ and leaf set $\{w_1,\dots,w_{|S|}\}$.
We assign time-labels to the edges of this star as follows.
For every $i\in[|S|]$, if $t'_x$ and $t'_y$ are the two neighbors of $s_i$ in $G'$, then set $\lambda(vw_i)=\{x,y\}$.
Since every vertex $s_i\in S$ has degree exactly $2$ in $G'$, this is well defined, and every edge $vw_i$ receives exactly two time-labels.

Before completing the construction of $\widehat{G}$, observe what the core represents.
Consider the maximum $\Delta$-matching problem restricted to the core star, with the additional constraint that at most one time tick may be selected on each edge of the core star.
Under the natural correspondence between an edge $s_it_j$ of $G$ and the time edge of $vw_i$ whose tick is the position of the corresponding copy of $t_j$ in $G'$, this restricted problem is exactly the double matching problem on $G$.
The ordering of $T'$ has been chosen so that two such core time edges conflict at the center $v$ exactly when the corresponding edges of $G$ cannot be chosen together in one of the two matchings $M\cap E_1$ or $M\cap E_2$.
We refer to this restricted problem on the core star as the \emph{core problem}.

We continue the construction of $\widehat{G}$ so that an optimal solution to the maximum $\Delta$-matching problem on the resulting temporal graph can be converted in polynomial time into an optimal solution to the core problem.
For each $i\in[|S|]$, we attach a subtree at $w_i$ as follows.
By possibly swapping the two labels, assume that $\lambda(vw_i)=\{x,y\}$ with $x<y$.
Let the neighbors of $s_i$ in the original graph $G$ be $t_p$ and $t_q$, with $p\neq q$.
By the construction of $G'$, the two neighbors of $s_i$ in $G'$ are $t_p^a$ and $t_q^a$ for the same $a\in[3]$, because the copy index is determined only by the membership of $s_i$ in $S_1$ and $S_2$.
The vertices of $T'$ are ordered block by block, and each block has size $3(\Delta-1)+1$.
Hence $y-x=|p-q|\bigl(3(\Delta-1)+1\bigr)$, and in particular $y-x>\Delta$.
Write $y-x=\ell(\Delta-1)+m$, where $\ell\in\mathbb{Z}_{\geq0}$ and $0\leq m<\Delta-1$.
Since $y-x>\Delta-1$, we have $\ell\in\mathbb{Z}_{>0}$.
We distinguish cases according to the parity of $\ell$ and the value of $m$.

If $\ell$ is even and $m\geq1$, then add new leaves $w_i^j$ for $j\in[\ell]$ adjacent to $w_i$.
For each $j\in[\ell]$, assign the single time-label $x+j(\Delta-1)$ to the edge $w_iw_i^j$, that is, set $\lambda(w_iw_i^j)=\{x+j(\Delta-1)\}$.

If $\ell$ is even and $m=0$, then we distinguish two cases.
First, if $\Delta=2$, then add an extra vertex $\widehat w_i$ and the edge $w_i\widehat w_i$.
Set $\lambda(w_i\widehat w_i)=\{x,y+1\}$.
Add new vertices $\widehat w_i^j$ for $j\in[\ell]$ and the edges $\widehat w_i\widehat w_i^j$ for $j\in[\ell]$.
Every new edge has exactly one time-label, namely $\lambda(\widehat w_i\widehat w_i^j)=\{x+j(\Delta-1)\}$ for every $j\in[\ell]$.
Second, if $\Delta \geq 3$, then add new vertices $w_i^j$ for $j\in[\ell]$ and the edges $w_iw_i^j$ for $j\in[\ell]$.
Every new edge has exactly one time-label, namely $\lambda(w_iw_i^j)=\{x+(j-1)(\Delta-1)+1\}$ for every $j\in[\ell]$.

If $\ell$ is odd and $m=0$, then add new vertices $w_i^j$ for $j\in[\ell-1]$ and the edges $w_iw_i^j$ for $j\in[\ell-1]$.
Every new edge has exactly one time-label, namely $\lambda(w_iw_i^j)=\{x+j(\Delta-1)\}$ for every $j\in[\ell-1]$.

If $\ell$ is odd and $m=1$, add an extra vertex $\widehat w_i$ and the edge $w_i\widehat w_i$.
Set $\lambda(w_i\widehat w_i)=\{x,y+1\}$.
Then add new vertices $\widehat w_i^j$ for $j\in[\ell+1]$ and the edges $\widehat w_i\widehat w_i^j$ for $j\in[\ell+1]$.
Every new edge has exactly one time-label, namely $\lambda(\widehat w_i\widehat w_i^j)=\{x+(j-1)(\Delta-1)+1\}$ for every $j\in[\ell+1]$.

Finally, if $\ell$ is odd and $m \geq 2$, add new vertices $w_i^j$ for $j\in[\ell+1]$ and the edges $w_iw_i^j$ for $j\in[\ell+1]$.
Every new edge has exactly one time-label, namely $\lambda(w_iw_i^j)=\{x+(j-1)(\Delta-1)+1\}$ for every $j\in[\ell+1]$.

\newcommand{\brokenedge}[3]{%
  \draw[blackedge] (#1) -- ($(#1)!0.5!(#2)$);
  \draw[blackedge] ($(#1)!0.5!(#2)$) -- (#2);
  \node[fill=white, inner sep=1pt] at ($(#1)!0.50!(#2)$) {$#3$};
}

\begin{figure}
\centering
\begin{tikzpicture}[
  vnode/.style={circle, fill=black, inner sep=0pt, minimum size=5pt},
  blackedge/.style={thick}
]

\pgfmathsetmacro{\L}{2.25}     %

\pgfmathsetmacro{\aone}{-140}
\pgfmathsetmacro{\atwo}{-90}
\pgfmathsetmacro{\athree}{-40}

\pgfmathsetmacro{\bA}{-44}
\pgfmathsetmacro{\bB}{-16}
\pgfmathsetmacro{\bC}{16}
\pgfmathsetmacro{\bD}{44}

\node[vnode, label=above:{$v$}] (v) at (0,0) {};

\node[vnode, label=above:{$w_1$}] (w1) at ($(v)+(\aone:\L)$) {};
\node[vnode, label=right:{$w_2$}] (w2) at ($(v)+(\atwo:\L)$) {};
\node[vnode, label=above:{$w_3$}] (w3) at ($(v)+(\athree:\L)$) {};

\node[vnode, label=above:{$\widehat w_1$}] (wh1) at ($(v)!2!(w1)$) {};
\node[vnode, label=right:{$\widehat w_2$}] (wh2) at ($(v)!2!(w2)$) {};
\node[vnode, label=above:{$\widehat w_3$}] (wh3) at ($(v)!2!(w3)$) {};

\node[vnode] (wh11) at ($(wh1)+(\aone+\bA:\L)$) {};
\node[vnode] (wh12) at ($(wh1)+(\aone+\bB:\L)$) {};
\node[vnode] (wh13) at ($(wh1)+(\aone+\bC:\L)$) {};
\node[vnode] (wh14) at ($(wh1)+(\aone+\bD:\L)$) {};

\node at ($(wh1)!1.23!(wh11)$) {$\widehat w_1^1$};
\node at ($(wh1)!1.23!(wh12)$) {$\widehat w_1^2$};
\node at ($(wh1)!1.23!(wh13)$) {$\widehat w_1^3$};
\node at ($(wh1)!1.23!(wh14)$) {$\widehat w_1^4$};

\node[vnode] (wh21) at ($(wh2)+(\atwo+\bA:\L)$) {};
\node[vnode] (wh22) at ($(wh2)+(\atwo+\bB:\L)$) {};
\node[vnode] (wh23) at ($(wh2)+(\atwo+\bC:\L)$) {};
\node[vnode] (wh24) at ($(wh2)+(\atwo+\bD:\L)$) {};

\node at ($(wh2)!1.23!(wh21)$) {$\widehat w_2^1$};
\node at ($(wh2)!1.23!(wh22)$) {$\widehat w_2^2$};
\node at ($(wh2)!1.23!(wh23)$) {$\widehat w_2^3$};
\node at ($(wh2)!1.23!(wh24)$) {$\widehat w_2^4$};

\node[vnode] (wh31) at ($(wh3)+(\athree+\bA:\L)$) {};
\node[vnode] (wh32) at ($(wh3)+(\athree+\bB:\L)$) {};
\node[vnode] (wh33) at ($(wh3)+(\athree+\bC:\L)$) {};
\node[vnode] (wh34) at ($(wh3)+(\athree+\bD:\L)$) {};

\node at ($(wh3)!1.23!(wh31)$) {$\widehat w_3^1$};
\node at ($(wh3)!1.23!(wh32)$) {$\widehat w_3^2$};
\node at ($(wh3)!1.23!(wh33)$) {$\widehat w_3^3$};
\node at ($(wh3)!1.23!(wh34)$) {$\widehat w_3^4$};

\brokenedge{v}{w1}{1,5}
\brokenedge{v}{w2}{2,6}
\brokenedge{v}{w3}{3,7}

\brokenedge{w1}{wh1}{1,6}
\brokenedge{w2}{wh2}{2,7}
\brokenedge{w3}{wh3}{3,8}

\brokenedge{wh1}{wh11}{2}
\brokenedge{wh1}{wh12}{3}
\brokenedge{wh1}{wh13}{4}
\brokenedge{wh1}{wh14}{5}

\brokenedge{wh2}{wh21}{3}
\brokenedge{wh2}{wh22}{4}
\brokenedge{wh2}{wh23}{5}
\brokenedge{wh2}{wh24}{6}

\brokenedge{wh3}{wh31}{4}
\brokenedge{wh3}{wh32}{5}
\brokenedge{wh3}{wh33}{6}
\brokenedge{wh3}{wh34}{7}

\end{tikzpicture}
\caption{The tree $\widehat G$ obtained by applying the construction from the proof of Theorem~\ref{thm:Delta-m-on-trees} to the example shown in Figure~\ref{fig:elso}, for $\Delta=2$.}
\label{fig:masodik}
\end{figure}

This completes the construction of $\widehat{G}$, which is illustrated in Figure~\ref{fig:masodik}.
The graph $\widehat{G}$ is a tree, every edge has at most two time-labels, and the construction is polynomial in the size of the double matching instance.
We next prove that an optimal solution on $\widehat{G}$ yields an optimal solution to the core problem.
Call a solution of the maximum $\Delta$-matching problem on $\mathcal{G}$ \emph{normalized} if it contains at most one time edge from each edge of the core star.
The next lemma shows that the attached subtrees force no loss of optimality when we restrict attention to solutions using at most one appearance of each core edge.

\begin{lemma}\label{lem:normalized-lemma-1}
For every optimal solution of the maximum $\Delta$-matching problem on $\mathcal{G}$, there exists an optimal normalized solution of the same size, and such a solution can be obtained in polynomial time from it.
\end{lemma}
\begin{proof}
Let $M$ be an optimal solution, and suppose that $M$ contains both time edges $(vw_i,x)$ and $(vw_i,y)$ of some core edge $vw_i$, where $x<y$.
Let $T_i$ be the subtree attached at $w_i$, excluding the core edge $vw_i$, and let $Q$ be the set of time edges of $M$ lying in $T_i$.
It is enough to show that, after deleting $(vw_i,y)$, there is a feasible set $Q'$ of time edges in $T_i$ that is compatible with $(vw_i,x)$ and satisfies $|Q'|\geq |Q|+1$.
We then keep exactly $|Q|+1$ time edges from $Q'$ and replace $Q\cup\{(vw_i,y)\}$ by these time edges.
This preserves the cardinality of the solution.
Since $T_i$ meets the rest of the graph only in $w_i$, and compatibility with the remaining core time edge $(vw_i,x)$ is ensured, the replacement cannot create any conflict outside~$T_i$.

Put $d=\Delta-1$.
In each star considered below, the labels on the leaf edges form a consecutive arithmetic progression with difference $d$.
Two consecutive labels in such a progression conflict at the center of the star, while labels at distance at least two in the progression are compatible, since $2d\geq \Delta$.
Hence from $q$ consecutive leaf labels one can select exactly $\lceil q/2\rceil$ pairwise compatible time edges, by taking every other label.

First suppose that no vertex $\widehat w_i$ is introduced.
Then $T_i$ is a star centered at $w_i$.
In each of the relevant cases, the first leaf label is the only leaf label conflicting with $(vw_i,x)$, and the last leaf label is the only leaf label conflicting with $(vw_i,y)$.
This follows directly from the displayed label sequences: the next leaf label after the first is at distance at least $\Delta$ from $x$, and the leaf label before the last is at distance at least $\Delta$ from $y$.
Thus, while both core time edges are present, $Q$ can use only the leaf labels between the first and the last one, whereas after deleting $(vw_i,y)$ only the first leaf label is ruled out.
If $\ell$ is even, there are $\ell$ leaf labels, so $|Q|\leq \ell/2-1$, while after the deletion the remaining labels contain $\ell/2$ compatible choices.
If $\ell$ is odd and $m=0$, there are $\ell-1$ leaf labels, so $|Q|\leq (\ell-3)/2$, while after the deletion the remaining labels contain $(\ell-1)/2$ compatible choices.
If $\ell$ is odd and $m\geq2$, there are $\ell+1$ leaf labels, so $|Q|\leq (\ell-1)/2$, while after the deletion the remaining labels contain $(\ell+1)/2$ compatible choices.
Hence each case without an auxiliary vertex gives a feasible set $Q'$ with $|Q'|\geq |Q|+1$.

It remains to consider the cases in which $\widehat w_i$ is introduced.
The edge $w_i\widehat w_i$ has labels $x$ and $y+1$.
While both core time edges are present, neither of these can be selected: the first conflicts with $(vw_i,x)$, and the second conflicts with $(vw_i,y)$.
Thus $Q$ uses only leaf edges incident to $\widehat w_i$.
After deleting $(vw_i,y)$, the time edge $(w_i\widehat w_i,y+1)$ is compatible with $(vw_i,x)$, because $y+1-x>\Delta$.
Among the leaf labels incident to $\widehat w_i$, it conflicts only with the last one.
If $\Delta=2$, $\ell$ is even, and $m=0$, then the $\ell$ leaf labels incident to $\widehat w_i$ contribute at most $\ell/2$ time edges to $Q$, while the labels remaining after the last one is omitted contain $\ell/2$ compatible choices.
If $\ell$ is odd and $m=1$, then the $\ell+1$ leaf labels incident to $\widehat w_i$ contribute at most $(\ell+1)/2$ time edges to $Q$, while the labels remaining after the last one is omitted contain $(\ell+1)/2$ compatible choices.
Together with $(w_i\widehat w_i,y+1)$, this again gives a feasible set $Q'$ with $|Q'|\geq |Q|+1$.

Repeating this local replacement for every core edge on which two time edges are selected yields an optimal normalized solution.
Since each replacement is local and the attached subtrees have polynomial size, the normalization can be carried out in polynomial time.
\end{proof}

It remains to show that the restriction of an optimal normalized solution to the core star is optimal for the core problem.

\begin{lemma}\label{lem:normalized-lemma-2}
For every optimal normalized solution to the maximum $\Delta$-matching problem on $\mathcal{G}$, its restriction to the core star is an optimal solution to the core problem.
\end{lemma}
\begin{proof}
For every core edge $vw_i$, we use the following property of the subtree attached at $w_i$.
Among the time edges in this subtree, the maximum number that can be selected compatibly with the choice on $vw_i$ is the same whether no time edge of $vw_i$ is selected or exactly one time edge of $vw_i$ is selected.
If exactly one time edge of $vw_i$ is selected, this maximum is independent of which of the two time edges of $vw_i$ is selected.
We justify this property once, using the notation from the construction.
In the cases without an auxiliary vertex, the leaf labels form a consecutive arithmetic progression with difference $\Delta-1$ and with an even number $r$ of terms, where $r$ is $\ell$, $\ell-1$, or $\ell+1$ according to the case.
As in the proof of Lemma~\ref{lem:normalized-lemma-1}, the maximum number of selectable leaf labels is $r/2$.
If $(vw_i,x)$ is selected, exactly the first leaf label is ruled out, and if $(vw_i,y)$ is selected, exactly the last leaf label is ruled out; in either case the maximum remains $\lceil (r-1)/2\rceil=r/2$.
In the case where $\Delta=2$, $\ell$ is even, and $m=0$, the non-core labels form the integer interval $[x,y+1]$.
With no core time edge selected, this sequence has maximum $(\ell+2)/2$; selecting one core time edge excludes one endpoint and leaves $\ell+1$ consecutive labels, whose maximum is again $\ell/2+1$.
Finally, consider the case where $\ell$ is odd and $m=1$.
The labels $x$ and $y+1$ on $w_i\widehat w_i$ are compatible with all leaf labels except the first and the last one, respectively.
If no core time edge is selected, the two endpoint pairs contribute at most two labels in total, and the remaining $\ell-1$ leaf labels contribute at most $(\ell-1)/2$; this bound is attained by taking $x$, $y+1$, and every other remaining leaf label.
If $(vw_i,x)$ is selected, the pair formed by $y+1$ and the last leaf label contributes at most one, while the remaining $\ell$ leaf labels contribute at most $\lceil\ell/2\rceil=(\ell+1)/2$; this bound is attained by taking $y+1$ and every other leaf label except the last.
The case in which $(vw_i,y)$ is selected is symmetric.
Thus the property holds in every case.

Let $C$ denote the set of time edges of the core star.
Let $M$ be an optimal normalized solution whose restriction $M\cap C$ is not optimal for the core problem.
Then there exists a feasible core solution $N_c\subseteq C$ with $|N_c|>|M\cap C|$.
For each attached subtree, choose a maximum feasible set compatible with the core choice prescribed by $N_c$.
By the property above, the total number of chosen non-core time edges is at least $|M\setminus C|$.
Combining these choices with $N_c$ gives a feasible solution $M'$ on $\mathcal{G}$ with $M'\cap C=N_c$ and $|M'\setminus C|\geq |M\setminus C|$.
Hence $|M'|>|M|$, contradicting the optimality of $M$.
\end{proof}

Suppose that there were a polynomial-time algorithm for maximum $\Delta$-matching on the constructed temporal trees.
Applying it to $\mathcal{G}$, and then applying Lemma~\ref{lem:normalized-lemma-1}, we obtain in polynomial time an optimal normalized solution.
By Lemma~\ref{lem:normalized-lemma-2}, the restriction of this solution to the core star is an optimal solution to the core problem.
Under the correspondence described above, this gives an optimal solution to the original double matching instance.
Thus a polynomial-time algorithm for maximum $\Delta$-matching on the constructed trees would imply a polynomial-time algorithm for the restricted double matching problem, which is NP-hard by Lemma~\ref{lem:double-matching}.
This proves the theorem.
\end{proof}

Theorem~\ref{thm:Delta-m-on-trees} shows that, for every $\Delta\geq 2$, the maximum $\Delta$-matching problem is NP-hard on trees even if every edge appears at most twice.
We now derive the corresponding hardness result for maximum $\gamma$-matching.

\begin{theorem}\label{thm:Gamma-m-on-trees}
For every $\gamma\geq 2$, the maximum $\gamma$-matching problem is NP-hard, even if the underlying graph is a tree and each edge admits at most two $\gamma$-edges.
\end{theorem}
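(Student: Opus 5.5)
The plan is to compose the hardness construction from the proof of Theorem~\ref{thm:Delta-m-on-trees} with the restricted reduction of Lemma~\ref{lem:delta-to-gamma}. Fix $\gamma\geq 2$ and set $\Delta=\gamma$. Take the temporal tree $\mathcal{G}=(\widehat G,\lambda)$ built in that proof from a restricted double matching instance. Check that it satisfies the hypothesis of Lemma~\ref{lem:delta-to-gamma}: for every edge $e$, distinct labels in $\lambda(e)$ differ by more than $\Delta$. Then apply the lemma to obtain a $\gamma$-matching instance on the same tree $\widehat G$ with the same optimum value. Since maximum $\Delta$-matching on these particular trees is NP-hard, so is maximum $\gamma$-matching on the resulting trees.

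The key step is verifying the separation hypothesis, edge type by edge type.

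\begin{itemize}
\item Edges with a single label satisfy it vacuously; this covers all leaf edges $w_iw_i^j$ and $\widehat w_i\widehat w_i^j$.
\item For a core edge $vw_i$ with $\lambda(vw_i)=\{x,y\}$, the construction gives $y-x=|p-q|\bigl(3(\Delta-1)+1\bigr)\geq 3\Delta-2$. This exceeds $\Delta$ because $\Delta\geq 2$; the paper also records $y-x>\Delta$ explicitly.
\item For an auxiliary edge $w_i\widehat w_i$ with labels $\{x,y+1\}$, we get $(y+1)-x>y-x>\Delta$.
\end{itemize}

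Hence every edge of $\widehat G$ meets the condition of Lemma~\ref{lem:delta-to-gamma}.

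Next, record the structural properties of the output instance. The underlying graph is still $\widehat G$, which is a tree. By the proof of Lemma~\ref{lem:delta-to-gamma}, the $\gamma$-edges of $e$ in $\mathcal{G}'$ are exactly $(e,\tau)_\gamma$ for $\tau\in\lambda(e)$. Since $|\lambda(e)|\leq 2$, each edge admits at most two $\gamma$-edges.

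For the size of the output: the new label sets $\lambda'(e)$ have size $\gamma|\lambda(e)|\leq 2\gamma$. Because $\gamma$ is a fixed constant, this is a polynomial blow-up, and the whole composed reduction is polynomial. Optimal values coincide, and a $\gamma$-matching converts back to a $\Delta$-matching of the same size. So a polynomial algorithm for the $\gamma$-problem on such trees would solve the $\Delta$-problem on the trees of Theorem~\ref{thm:Delta-m-on-trees}. That would in turn solve the restricted double matching problem, which is NP-hard by Lemma~\ref{lem:double-matching}.

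The only real obstacle is the separation check. It must cover every type of two-label edge, including the auxiliary edge $w_i\widehat w_i$, which arises in two different cases ($\Delta=2$ with $\ell$ even and $m=0$, and $\ell$ odd with $m=1$). The check must also use strict inequality $>\Delta$, not $\geq\Delta$, so that the intervals of length $\gamma$ do not merge into additional $\gamma$-edges. Both points follow directly from $y-x>\Delta$, so I expect the argument to be short.
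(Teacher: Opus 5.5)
Your proposal is correct and matches the paper's proof essentially step for step. Like the paper, you apply Lemma~\ref{lem:delta-to-gamma} with $\gamma=\Delta$ to the tree from the proof of Theorem~\ref{thm:Delta-m-on-trees}, after checking that core labels differ by a positive multiple of $3(\Delta-1)+1>\Delta$ and that the auxiliary edge labels satisfy $y+1-x>\Delta$; the tree structure and the bound of at most two $\gamma$-edges per edge are then inherited directly.
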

\begin{proof}
Fix $\gamma\geq 2$, and consider the temporal graph $\mathcal{G}$ constructed in the proof of Theorem~\ref{thm:Delta-m-on-trees} with $\Delta=\gamma$.
For every core edge $vw_i$, the two time-labels were chosen as the positions of $t_p^a$ and $t_q^a$ in the ordering of $T'$, where $p\neq q$ and the copy index $a\in[3]$ is the same for the two labels.
Since the vertices of $T'$ are ordered in blocks of size $3(\Delta-1)+1$, these two labels differ by a positive multiple of $3(\Delta-1)+1$, and hence by more than $\Delta$.
Every non-core edge has either one time-label or, in the two cases where the edge $w_i\widehat w_i$ is introduced, the two time-labels $x$ and $y+1$.
In the latter case, $y+1-x>\Delta$ by the preceding paragraph.
Thus any two distinct time-labels of the same edge in $\mathcal{G}$ differ by more than $\Delta$, so the condition of Lemma~\ref{lem:delta-to-gamma} holds for $\mathcal{G}$.
Applying that lemma with its $\gamma$-parameter equal to $\Delta$ reduces the maximum $\Delta$-matching instance on $\mathcal{G}$ to a maximum $\gamma$-matching instance.
This reduction does not change the underlying graph, and hence the underlying graph remains a tree.
Moreover, since every edge of $\mathcal{G}$ appears at most twice, each edge in the reduced instance admits at most two $\gamma$-edges.
Therefore, the maximum $\gamma$-matching problem is NP-hard under the stated restrictions.
\end{proof}

\section{Solvable cases}\label{sec:single-edge-subsection}

In Section~\ref{sec:multiple-edge-subsection}, we proved that the maximum $\Delta$-matching problem is NP-hard on trees even if every edge appears at most twice, and we established an analogous hardness result for the maximum $\gamma$-matching problem.
In this section, we consider the complementary setting: for $\Delta$-matching, every edge appears exactly once, and for $\gamma$-matching, each edge admits at most one $\gamma$-edge.
Under these restrictions, we show that both problems are polynomial-time solvable on trees.
We also present additional polynomial-time solvable cases and, finally, prove that the maximum $d$-matching problem is polynomial-time solvable on trees.

We first consider the single-appearance case, where each edge contributes only one time edge.
The resulting polynomial-time algorithm will later imply tractability of maximum $d$-matching on trees through the reduction from Lemma~\ref{lem:d-delta-cor}.
\begin{theorem}\label{thm:single-appear-theorem}
The maximum $\Delta$-matching problem can be solved in polynomial time for temporal graphs $\mathcal{G}=(G,\lambda)$ whose underlying graph $G=(V,E)$ is a tree and in which every edge appears exactly once.
\end{theorem}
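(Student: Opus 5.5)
We would solve the problem by dynamic programming over the tree, rooted at an arbitrary vertex $r$. Since every edge $e$ appears exactly once, write $\tau(e)$ for its unique time-label. Selecting a $\Delta$-matching is then the same as selecting a set $F\subseteq E$ of edges. The constraint is that at every vertex $u$, the labels of the edges of $F$ incident with $u$ pairwise differ by at least $\Delta$. This constraint is purely local to vertices, and in a tree two edges interact only through a shared vertex. So the subtree below a vertex $u$ communicates with the rest of the tree only through the decision on the parent edge of $u$.

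For each non-root vertex $u$ with parent edge $e_u$, we would define two quantities. $g_{\mathrm{in}}(u)$ is the maximum number of selected edges inside the subtree $G_u$ rooted at $u$, subject to $e_u$ being selected. $g_{\mathrm{out}}(u)$ is the same maximum subject to $e_u$ not being selected. Neither quantity counts $e_u$ itself. For the root only the "out" value is needed, and it is the optimum.

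The recurrence at a vertex $u$ with children $c_1,\dots,c_k$ is as follows. Choosing the edge $uc_j$ contributes $1+g_{\mathrm{in}}(c_j)$ to the count. Not choosing it contributes $g_{\mathrm{out}}(c_j)$. Hence
\[
g_{\bullet}(u)=\sum_{j}g_{\mathrm{out}}(c_j)+\max_{J}\sum_{j\in J}w_j,
\qquad w_j=1+g_{\mathrm{in}}(c_j)-g_{\mathrm{out}}(c_j).
\]
The maximum is over sets $J$ of children whose labels $\tau(uc_j)$ pairwise differ by at least $\Delta$. In the case $\bullet=\mathrm{in}$, we additionally require that every $j\in J$ satisfies $|\tau(uc_j)-\tau(e_u)|\geq\Delta$; equivalently, we delete the children whose labels fall within $\Delta-1$ of $\tau(e_u)$. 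The weights $w_j$ may be negative, since selecting $e_{c_j}$ can exclude several edges below $c_j$. Children with $w_j\le 0$ are simply never worth taking.

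The inner maximization is a weighted selection of points on a line with minimum pairwise gap $\Delta$, which is a weighted interval scheduling problem. We sort the children by label as $\tau_1\le\dots\le\tau_k$ and let $p(j)$ be the largest index $i$ with $\tau_i\le\tau_j-\Delta$. Then $h(j)=\max\{h(j-1),\,w_j+h(p(j))\}$, with $h(0)=0$. Equal labels are automatically treated as conflicting, because $\Delta\geq1$. This runs in $O(k\log k)$ time per vertex, so the whole algorithm is polynomial (indeed $O(|V|\log|V|)$). We can recover an optimal solution by standard backtracking.

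The main point to argue carefully is correctness of the decomposition. We must show that a set $F$ is feasible if and only if, for every vertex $u$, the edges of $F$ incident with $u$ satisfy the local gap condition. We must also show that, once the status of $e_u$ is fixed, the feasible choices inside $G_{c_1},\dots,G_{c_k}$ interact only through the set of selected edges $uc_j$ at $u$. Both facts follow because $\Delta$-independence only constrains pairs of time edges sharing an endpoint, and distinct child subtrees share no vertices. Given this, an exchange argument shows that an optimal solution restricted to each child subtree can be replaced by an optimal one for the corresponding "in" or "out" value without creating conflicts. This yields the recurrence, and induction from the leaves upward completes the proof. A leaf $u$ has $g_{\mathrm{in}}(u)=g_{\mathrm{out}}(u)=0$.
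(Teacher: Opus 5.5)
Your proposal is correct and follows essentially the same route as the paper: a two-state dynamic program on the rooted tree (parent edge selected or not). At each vertex the children are handled by a weighted-interval-scheduling recursion over their sorted ticks, and in the parent-selected case the children within $\Delta-1$ of the parent's tick are discarded. Your normalized weights $w_j=1+g_{\mathrm{in}}(c_j)-g_{\mathrm{out}}(c_j)$ and index $p(j)$ are exactly the paper's recurrence for $A_i^{((0)}$ with the baseline $\sum_j\mathrm{OPT}_0(v'_j)$ subtracted out, and your $p(j)$ coincides with the paper's $\alpha_i$.
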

\begin{proof}
  We solve the problem by dynamic programming.
  Choose an arbitrary vertex $r\in V$, and root the tree $G$ at $r$.
  For every vertex $v\neq r$, let $p(v)$ denote the parent of~$v$.
  Since every edge appears exactly once, each edge corresponds to a unique time edge; throughout this proof, we therefore speak simply of selecting edges.

  For every vertex $v\neq r$, let $\mathrm{OPT}_0(v)$ and $\mathrm{OPT}_1(v)$ denote the maximum sizes of $\Delta$-matchings in the subgraph induced by $p(v)$ and the vertices of the subtree rooted at $v$, under the condition that the edge $vp(v)$ is not selected and is selected, respectively.

  For each vertex $v\neq r$, we compute the values $\mathrm{OPT}_0(v)$ and $\mathrm{OPT}_1(v)$ bottom-up, starting from the leaves, as follows.
  Let $v'_1,\dots,v'_k$ be the children of $v$.
  For every $i\in [k]$, let $\tau_i$ denote the unique time tick of the edge $vv'_i$, that is, $\lambda(vv'_i)=\{\tau_i\}$.
  Choose an ordering of the children such that $\tau_1\leq \dots \leq \tau_k$.

  To compute $\mathrm{OPT}_0(v)$, for every $i\in[k]$, let $\alpha_i$ be the largest index $h<i$ for which $\tau_i-\tau_h\geq \Delta$; if no such index exists, then let $\alpha_i=0$.
  For every $i\in[0,k]$, let $A_i^{(0)}$ denote the maximum size of a $\Delta$-matching in the subgraph induced by $v$ and the vertices of the subtrees rooted at $v'_1,\dots,v'_i$, under the condition that the edge $vp(v)$ is not selected.
  Clearly, $A_0^{(0)}=0$.
  For every $i\in [k]$, we have
  \[
    A_i^{(0)}=\max\left\{
      A_{i-1}^{(0)} + \mathrm{OPT}_0(v'_i),\
      A_{\alpha_i}^{(0)} + \sum_{j=\alpha_i+1}^{i-1}\mathrm{OPT}_0(v'_j) + \mathrm{OPT}_1(v'_i)
    \right\}.
  \]
  Indeed, if the edge $vv'_i$ is not selected, then the subtree rooted at $v'_i$ contributes $\mathrm{OPT}_0(v'_i)$, while the first $i-1$ subtrees contribute $A_{i-1}^{(0)}$.
  If the edge $vv'_i$ is selected, then no edge $vv'_j$ with $j<i$ and $\tau_i-\tau_j<\Delta$ can be selected.
  By the definition of $\alpha_i$, this means that among the earlier edges, only $vv'_1,\dots,vv'_{\alpha_i}$ may be selected.
  Hence the first $\alpha_i$ subtrees contribute $A_{\alpha_i}^{(0)}$, the subtree rooted at $v'_i$ contributes $\mathrm{OPT}_1(v'_i)$, and each subtree rooted at $v'_j$ with $\alpha_i<j<i$ contributes $\mathrm{OPT}_0(v'_j)$.
  Therefore, $\mathrm{OPT}_0(v)=A_k^{(0)}$.

  To compute $\mathrm{OPT}_1(v)$, let $\tau_0$ be the unique time tick of the edge $vp(v)$, and let $i_1<\dots<i_\ell$ be precisely those indices $i\in [k]$ for which $|\tau_i-\tau_0|\geq \Delta$.
  Only the edges $vv'_{i_1},\dots,vv'_{i_\ell}$ may be selected together with $vp(v)$.
  Hence we apply the same recurrence as above to the restricted sequence of children $v'_{i_1},\dots,v'_{i_\ell}$.
  Let $A_\ell^{(1)}$ denote the value obtained by applying the above recurrence to the restricted sequence.
  Every remaining child $v'_i$ with $i\notin \{i_1,\dots,i_\ell\}$ cannot be matched to $v$, and therefore its subtree contributes $\mathrm{OPT}_0(v'_i)$.
  Adding also the contribution of the selected edge $vp(v)$, we obtain
  \[
    \mathrm{OPT}_1(v)
    =
    1+\sum_{i\in [k] \setminus \{i_1,\dots,i_\ell\}}\mathrm{OPT}_0(v'_i)+A_\ell^{(1)}.
  \]
  Finally, after all values for the children of $r$ have been computed, we apply the same recurrence used for $\mathrm{OPT}_0(v)$ to the root $r$, since the root has no parent edge.
  Let $\mathrm{OPT}(r)$ denote the value obtained in this way.
  Since every $\Delta$-matching of the original instance induces one of the choices considered at the root, and every solution counted by $\mathrm{OPT}(r)$ is feasible in the original instance, $\mathrm{OPT}(r)$ is the optimum value.
  The values are computed in polynomial time by processing the vertices bottom-up.
  This proves the theorem.
\end{proof}

The corresponding $\gamma$-matching statement follows by the reduction in Section~\ref{sec:relations-subsection}: if each edge admits at most one $\gamma$-edge, then the reduced $\Delta$-matching instance has each edge appearing once.

\begin{corollary}
  The maximum $\gamma$-matching problem can be solved in polynomial time if the underlying graph $G=(V,E)$ of the input temporal graph $\mathcal{G}=(G,\lambda)$ is a tree and, for every edge $e$, there is at most one $\gamma$-edge of $e$.
\end{corollary}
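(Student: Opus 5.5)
The plan is to combine Observation~\ref{obs:gamma-to-delta} with Theorem~\ref{thm:single-appear-theorem}. The only work is to check that the reduction produces an instance meeting the hypotheses of the theorem.

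Given a temporal graph $\mathcal{G}=(G,\lambda)$ whose underlying graph $G$ is a tree, and $\gamma\in\mathbb{Z}_{>0}$, we first handle the trivial case: if no edge admits a $\gamma$-edge, the empty matching is optimal. Otherwise, we follow the proof of Observation~\ref{obs:gamma-to-delta}. We delete every edge admitting no $\gamma$-edge, and give each remaining edge $e$ the single time-label $\tau$, where $(e,\tau)_\gamma$ is its unique $\gamma$-edge. We then set $\Delta=\gamma$.

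By the hypothesis, each surviving edge appears exactly once, as Theorem~\ref{thm:single-appear-theorem} requires. The one subtlety is that deleting edges may turn the tree into a forest. This is handled in one of two ways:
\begin{itemize}
\item We apply Theorem~\ref{thm:single-appear-theorem} to each component separately and add the optimum values. This is valid because $\Delta$-independence constraints only arise between edges sharing an endpoint, so components do not interact.
\item Alternatively, we keep the deleted edges and give each of them a fresh time-label. If these edges were selected they would spoil the optimum, so deleting components is the cleaner route.
\end{itemize}
Each component is a tree in which every edge appears exactly once, so the dynamic program runs in polynomial time on it.

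Finally, Observation~\ref{obs:gamma-to-delta} guarantees a cardinality-preserving bijection between $\gamma$-matchings of $\mathcal{G}$ and $\Delta$-matchings of the reduced instance. An optimal $\Delta$-matching, which the dynamic program recovers by standard backtracking, therefore translates back into an optimal $\gamma$-matching. The reduction itself is polynomial: for each edge we detect its $\gamma$-edge by scanning $\lambda(e)$ for $\gamma$ consecutive labels, which takes time linear in $|\lambda(e)|$ after sorting.

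The main (minor) obstacle is the forest issue just described. Everything else is a direct combination of earlier results.
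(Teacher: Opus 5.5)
Your proposal is correct and matches the paper's proof: discard the edges with no $\gamma$-edge, apply the reduction of Observation~\ref{obs:gamma-to-delta} so that each surviving edge appears exactly once, and then run Theorem~\ref{thm:single-appear-theorem} on each component of the resulting forest, with isolated vertices contributing nothing. Your second bullet, keeping the deleted edges with fresh labels, would not be a valid alternative because it changes the optimum, but as you note yourself, the argument does not rely on it.
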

\begin{proof}
  Apply the reduction from maximum $\gamma$-matching to maximum $\Delta$-matching described in Section~\ref{sec:relations-subsection}.
  If no edge admits a $\gamma$-edge, then the empty matching is optimal.
  Otherwise, we construct the reduced instance only on those edges that admit a $\gamma$-edge.
  Since, by assumption, every edge admits at most one $\gamma$-edge, every edge in the reduced instance appears exactly once.
  The underlying graph of the reduced instance is a subgraph of the original tree, and hence it is a forest.
  Therefore, Theorem~\ref{thm:single-appear-theorem} applies to each non-trivial component of this forest, while isolated vertices contribute nothing.
  Since the reduction preserves feasible solutions and their cardinalities, this yields a polynomial-time algorithm for the original instance.
\end{proof}

The preceding two results impose per-edge restrictions.
We next allow arbitrary time-label sets, but assume that some optimal solution has bounded local use at every vertex.
The dynamic program below will later be used as the exact solver for the windows in the PTAS of Section~\ref{sec:ptas-subsection}.

\begin{theorem}\label{thm:PTAS-konstans}
  Consider the maximum $\Delta$-matching problem on a temporal graph $\mathcal{G}=(G,\lambda)$ whose underlying graph $G=(V,E)$ is a tree, and let $K\in\mathbb{Z}_{>0}$ be given.
  For every vertex $v\in V$, let $A_v=\bigcup_{e\ni v}\lambda(e)$ be the set of time ticks at which at least one edge incident to $v$ is active.
  Let $B=\max_{v\in V}|A_v|$.
  Suppose that there exists an optimal solution $M^*$ such that, for every vertex $v\in V$, the time edges of $M^*$ incident to $v$ use at most $K$ time ticks.
  Then an optimal solution can be found in $O\bigl((2eB)^K K^2 n\bigr)$ time, where $n=|V|$ and $e$ denotes Euler's number.
\end{theorem}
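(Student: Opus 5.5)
We root the tree at an arbitrary vertex $r$ and run a bottom-up dynamic program whose states are small sets of time ticks. For a vertex $v\neq r$ with parent $p(v)$ and a set $X\subseteq \lambda(vp(v))$ with $|X|\leq K$, define $F_v(X)$ as the maximum number of time edges in a $\Delta$-matching $M$ of the subtree rooted at $v$, plus the edge $vp(v)$, such that $M$ uses exactly the ticks $X$ on $vp(v)$ and uses at most $K$ ticks at every vertex of the subtree. We require $X$ itself to be $\Delta$-separated, and otherwise set $F_v(X)=-\infty$. By the hypothesis on $M^*$, restricting the search to solutions with at most $K$ ticks per vertex loses nothing, so the value at the root (computed the same way, with no parent edge) is optimal. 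A standard traceback then recovers an optimal solution.

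The number of candidate sets per vertex is $\sum_{j\leq K}\binom{B}{j}\leq (eB/K)^K$, or more crudely $(eB)^K$. To compute $F_v(X)$ we must choose, at $v$, a set $Y\subseteq A_v$ with $|Y|\le K$, $Y\supseteq X$ and $Y$ $\Delta$-separated. We must also split $Y\setminus X$ among the child edges so that the ticks assigned to child $c$ lie in $\lambda(vc)$. Since $Y$ is $\Delta$-separated, any assignment of its ticks to distinct-or-same edges is automatically feasible at $v$, and each tick goes to exactly one edge. For a fixed $Y$ this becomes a combination over the children.

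We process the children one by one, maintaining $G_i(Z)$ for $Z\subseteq Y\setminus X$, the best total over the first $i$ children whose ticks used at $v$ form exactly $Z$. The update is
\[
G_i(Z)=\max_{W\subseteq Z,\ W\subseteq\lambda(vc_i)}\bigl(G_{i-1}(Z\setminus W)+F_{c_i}(W)\bigr).
\]
Because $W$ ranges over subsets of $Z$, the work per child and per $Y$ is $3^{|Y|}\leq 3^K$. Then $F_v(X)=|X|+G_k(Y\setminus X)$, maximized over $Y$.

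Enumerating all pairs $(X,Y)$ costs roughly $2^K$ times the number of $Y$'s. This gives total time $\sum_v \deg(v)\cdot(\#Y)\cdot 2^K\cdot(\text{subset work})$, which is of the form $(c B)^K\poly(K)\, n$. We sum over vertices using $\sum_v\deg(v)=O(n)$.

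The main obstacle is matching the stated bound $O((2eB)^K K^2 n)$ exactly rather than some $(cB)^K$. To get there, we will restructure the computation so that we enumerate $Y$ once per vertex, with at most $(eB/K)^K\cdot$ small factors choices. Each child then consults only $F_{c}(W)$ for $W\subseteq Y$ by a lookup, costing $2^K$ per subset via $\sum_{W\subseteq Z}$ amortization. The $K^2$ factor comes from set operations on sorted $K$-element lists and the $\Delta$-separation checks. We will bound $\binom{B}{\le K}2^K\le (2eB/K)^K\cdot K$ and absorb the remaining factors. We expect the bookkeeping for this constant to need care. The correctness argument, by contrast, is routine: compatibility at a vertex depends only on the set of ticks used there, and subtrees meet only at their roots.
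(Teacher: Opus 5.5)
Your algorithm is correct and uses the same states as the paper: $F_v(X)$ is the paper's $\mathrm{OPT}(v,I)$, and you use $M^*$ in the same way to justify the $K$-ticks-per-vertex restriction. Where you differ is in how the children are combined. The paper enumerates the tick set $U$ at $v$, the parent part $I$, and a set partition of $U\setminus I$. It then assigns the parts to child edges by a maximum-weight bipartite matching with weights $\mathrm{OPT}(v'_j,L_i)-\mathrm{OPT}(v'_j,\emptyset)$. The matching gives the factor $K^2\max\{k_v,1\}$. The count $\binom{B}{m}2^m m^m\le(2eB)^m$ gives $(2eB)^K$, with the partition bound $m^m$ cancelling the $1/m!$ in the binomial coefficient. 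Your child-by-child subset DP needs neither partition enumeration nor a matching subroutine, and it keeps that $1/m!$. So it is more elementary and gives a better bound.

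You leave the running time as bookkeeping and call it the main obstacle, but it closes at once. Since $G_i(Z)$ does not depend on $X$, run the merge once for each $\Delta$-separated $Y\subseteq A_v$ with $|Y|=m\le K$, over all $Z\subseteq Y$. This takes $\max\{k_v,1\}\cdot 3^m$ steps. Then read off $|X|+G_{k_v}(Y\setminus X)$ for all $X\subseteq Y\cap\lambda(vp(v))$ in $2^m$ further steps, each step costing $\mathrm{poly}(m)$. Using $\binom{B}{m}\le B^m/m!$, the work at $v$ is at most $\max\{k_v,1\}\sum_{m=0}^{K}(3B)^m\,\mathrm{poly}(m)/m!=O\bigl(\max\{k_v,1\}(3B)^K\bigr)$. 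Since $3<2e$ and $\sum_v\max\{k_v,1\}=O(n)$, this lies within $O\bigl((2eB)^K K^2 n\bigr)$. Even your first, wasteful accounting, which redoes the merge for every pair $(X,Y)$, suffices once you use $\binom{B}{m}\le B^m/m!$ instead of the crude $(eB)^K$. The reason is that $(6B)^m/m!\le 2(2eB)^m$.
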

\begin{proof}
  We solve the problem by dynamic programming, using a rooted-tree setup similar to the one in the proof of Theorem~\ref{thm:single-appear-theorem}.
  Choose an arbitrary vertex $r\in V$, and root the tree $G$ at $r$.
  For every vertex $v\neq r$, let $p(v)$ denote the parent of~$v$.

  For every vertex $v\neq r$ and every subset $I\subseteq\lambda(vp(v))$ with $|I|\leq K$ and $|\tau-\sigma|\geq\Delta$ for every distinct $\tau,\sigma\in I$, let $\mathrm{OPT}(v,I)$ denote the maximum size of a $\Delta$-matching in the subgraph induced by $p(v)$ and the vertices of the subtree rooted at $v$, subject to the following two conditions: the edge $vp(v)$ is selected exactly at the time ticks in $I$, and at every vertex in the subtree rooted at $v$, the selected incident time edges use at most $K$ distinct time ticks.

  The values $\mathrm{OPT}(v,I)$ are computed bottom-up.
  Fix a vertex $v\neq r$, and suppose that the values for the children of $v$ have already been computed.
  Let $v'_1,\dots,v'_k$ be the children of $v$, and write $e_j=vv'_j$ for every $j\in[k]$.

  For this fixed vertex $v$, we compute all values $\mathrm{OPT}(v,I)$ simultaneously.
  We enumerate all sets $U\subseteq A_v$ of size at most $K$ such that $|\tau-\sigma|\geq\Delta$ for every distinct $\tau,\sigma\in U$.
  We then choose a subset $I\subseteq U$ with $I\subseteq\lambda(vp(v))$, representing the time ticks selected on the parent edge $vp(v)$.
  The remaining set $J=U\setminus I$ has to be assigned to the edges joining $v$ to its children.

  We consider every partition $\{L_1,\dots,L_h\}$ of $J$ into non-empty parts with $h\leq k$; if $J=\emptyset$, then we consider the empty partition.
  Each part $L_i$ is intended to be assigned to one child edge, while all remaining child edges are assigned the empty set.
  For a fixed partition, construct a bipartite graph $G'$ whose one side consists of $L_1,\dots,L_h$ and whose other side consists of $e_1,\dots,e_k$.
  We put an edge between $L_i$ and $e_j$ if and only if $L_i\subseteq\lambda(e_j)$, and assign weight $w'(L_i,e_j)=\mathrm{OPT}(v'_j,L_i)-\mathrm{OPT}(v'_j,\emptyset)$.
  We find a maximum-weight matching in $G'$ that covers the side $\{L_1,\dots,L_h\}$.
  If no such matching exists, then we discard the current partition.
  Otherwise, suppose that the matching assigns $L_i$ to $e_{\pi(i)}$.
  This gives the candidate value $|I|+\sum_{j=1}^{k}\mathrm{OPT}(v'_j,\emptyset)+\sum_{i=1}^{h}w'(L_i,e_{\pi(i)})$ for $\mathrm{OPT}(v,I)$.

  We now justify the recurrence.
  Every solution constructed in this way is feasible: the selected time ticks incident to $v$ are precisely the elements of $U$, and these are pairwise at distance at least~$\Delta$.
  Feasibility inside the child subtrees follows from the definition of the subproblems.

  Conversely, let $M$ be any feasible solution for the subproblem defining $\mathrm{OPT}(v,I)$.
  Let $U$ be the set of time ticks at which $M$ selects time edges incident to $v$.
  By the second condition in the definition of the subproblem, applied at the vertex $v$, we have $|U|\leq K$.
  Moreover, $U\subseteq A_v$, and since $M$ is a $\Delta$-matching, we have $|\tau-\sigma|\geq\Delta$ for every distinct $\tau,\sigma\in U$.
  The subset of $U$ used on the parent edge is exactly $I$, and the remaining set $U\setminus I$ is partitioned by the child edges used by $M$.
  Thus this choice of $U$, this subset $I$, and this partition are considered by the dynamic program.
  The assignment of the parts to the corresponding child edges gives a matching in $G'$ that covers $\{L_1,\dots,L_h\}$.
  For each child $v'_j$, the part of $M$ in the corresponding child subproblem contributes at most $\mathrm{OPT}(v'_j,L_i)$ if $e_j$ is assigned the set $L_i$, and at most $\mathrm{OPT}(v'_j,\emptyset)$ if $e_j$ is assigned the empty set.
  Therefore, the value of $M$ is at most one of the candidate values considered above, and the recurrence computes $\mathrm{OPT}(v,I)$.

  Finally, after all values for the children of $r$ have been computed, we apply the same enumeration-and-matching procedure to $r$, with the parent-edge set fixed to be empty.
  Thus, for the root, we take $I=\emptyset$ and $J=U$.
  Let $\mathrm{OPT}(r)$ denote the maximum value obtained in this way.

  We now prove that the dynamic program captures an optimal solution of the original instance.
  Let $M^*$ be an optimal solution satisfying the assumption of the theorem.
  For every vertex $v\in V$, the time ticks used by $M^*$ on edges incident to $v$ form a subset of $A_v$ of size at most $K$, and any two distinct such ticks differ by at least $\Delta$.
  Hence, when processing any non-root vertex $v$, the dynamic program considers the corresponding set $U$, the subset used on the parent edge, and the partition induced by the child edges.
  When processing the root $r$, it considers the corresponding set $U$ and the partition induced by the child edges.
  By induction from the leaves upward, the restriction of $M^*$ is represented in every subtree.
  Thus $\mathrm{OPT}(r)$ is at least the optimum value.
  Since every solution counted by $\mathrm{OPT}(r)$ is a feasible $\Delta$-matching for the original instance, equality follows.

  We now estimate the running time.
  Fix a vertex $v$, and let $k_v$ be the number of children of $v$.
  For fixed $m=|U|$, there are $\binom{|A_v|}{m}$ choices for $U$, at most $2^m$ choices for the subset assigned to the parent edge when $v\neq r$, and at most $m^m$ partitions of the remaining set, with the convention that the empty set has one partition.
  For the root, there is no parent-edge choice, so the same upper bound still applies.
  Hence the number of triples consisting of $U$, the subset assigned to the parent edge when applicable, and a partition of the remaining set is at most $\sum_{m=0}^{K}\binom{|A_v|}{m}2^m m^m$.
  Since $\binom{|A_v|}{m}\leq |A_v|^m/m!$ and $m^m/m!\leq e^m$ for $m\geq 1$, this sum is $O\bigl((2e|A_v|)^K\bigr)$.
  By the definition of $B$, we have $|A_v|\leq B$, and hence the number of triples is $O\bigl((2eB)^K\bigr)$.

  For each triple, we solve one maximum-weight bipartite matching problem in a graph with one side of size at most $K$ and the other side of size $k_v$.
  This takes $O(K^2\max\{k_v,1\})$ time.
  Thus the time spent at $v$ is $O\bigl((2eB)^K K^2\max\{k_v,1\}\bigr)$.
  Since $\sum_{v\in V}\max\{k_v,1\}=O(n)$, all values can be computed in $O\bigl((2eB)^K K^2 n\bigr)$ time.
  Storing the choices attaining the computed values yields an optimal solution within the same asymptotic running time.
\end{proof}

Using the reduction from $\gamma$-matching to $\Delta$-matching from Section~\ref{sec:relations-subsection}, the bounded-local-use result also yields the following analogue for $\gamma$-matchings.
\begin{corollary}
Consider the maximum $\gamma$-matching problem, where the underlying graph $G=(V,E)$ of the input temporal graph $\mathcal{G}=(G,\lambda)$ is a tree, and let $K\in\mathbb{Z}_{>0}$ be given.
Let
$
  B_\gamma=\max_{v\in V}|\{(e,\tau)_\gamma: e\in E,\ e\ni v,\ [\tau,\tau+\gamma-1]\subseteq\lambda(e)\}|.
$
If there exists an optimal solution $M^*$ for which the sets $M^*_v=\{(e,\tau)_\gamma\in M^* : e \text{ is incident to } v \}$ have size at most $K$ for every $v\in V$, then an optimal solution can be found in $O\bigl((2eB_\gamma)^K K^2 n\bigr)$ time, where $n=|V|$ and $e$ denotes Euler's number.
\end{corollary}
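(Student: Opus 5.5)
The plan is to apply Observation~\ref{obs:gamma-to-delta} and then invoke Theorem~\ref{thm:PTAS-konstans} on the reduced instance, checking that its hypotheses and its parameter $B$ translate correctly.

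First, if no edge admits a $\gamma$-edge, the empty set is optimal and we are done. Otherwise we build the reduced $\Delta$-matching instance with $\Delta=\gamma$. It keeps only the edges admitting a $\gamma$-edge, and each $\gamma$-edge $(e,\tau)_\gamma$ is replaced by the single time edge $(e,\tau)$. The underlying graph is then a subgraph of the tree $G$, hence a forest. We handle each nontrivial component separately; alternatively, we can keep all edges of $G$ and give the others no labels, but labels must be non-empty, so we work per component. Each component is a tree, and the optimum over the forest is the sum of the component optima.

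Second, we translate the parameters. For a vertex $v$, the set $A_v$ in the reduced instance is the set of starting ticks $\tau$ such that some edge $e\ni v$ satisfies $[\tau,\tau+\gamma-1]\subseteq\lambda(e)$. Each such tick comes from at least one $\gamma$-edge at $v$, so $|A_v|\leq B_\gamma$, and therefore $B\leq B_\gamma$ in every component.

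Third, we transfer the hypothesis on $M^*$. Under the bijection of Observation~\ref{obs:gamma-to-delta}, $M^*$ maps to an optimal $\Delta$-matching $N^*$. The time edges of $N^*$ incident to $v$ correspond one-to-one to $M^*_v$, so they use at most $|M^*_v|\leq K$ distinct time ticks. The restriction of $N^*$ to each component is optimal there, since optimality decomposes over components. So Theorem~\ref{thm:PTAS-konstans} applies to each component and runs in $O\bigl((2eB_\gamma)^K K^2 n_i\bigr)$ time, where $n_i$ is the number of vertices of the $i$-th component. Summing over components gives the claimed bound.

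Finally, we map the solution back through the bijection to obtain an optimal $\gamma$-matching. The main point needing care is the bound $B\leq B_\gamma$, which involves counting distinct ticks rather than $\gamma$-edges, together with the routine bookkeeping of the forest decomposition. The construction time of the reduction is polynomial. It is dominated by the stated bound once $B_\gamma$ counts the $\gamma$-edges, up to the input-size term needed to read the input.
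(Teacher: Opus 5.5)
Your proposal is correct and follows the paper's route: the paper gives no separate proof of this corollary beyond remarking that the reduction of Observation~\ref{obs:gamma-to-delta} combined with Theorem~\ref{thm:PTAS-konstans} yields it. You supply the omitted details, namely the componentwise application on the resulting forest, the bound $|A_v|\le B_\gamma$, and the transfer of the local-use hypothesis to the image of $M^*$. Your caveat is also valid: reading the input and building the reduced instance are not covered by the stated $O\bigl((2eB_\gamma)^K K^2 n\bigr)$ bound, since labels that form no $\gamma$-edge can make the input arbitrarily large, and the paper does not address this.
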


The hardness result of Section~3 shows that bounding the number of appearances of each edge is not enough for tractability: maximum $\Delta$-matching remains NP-hard on temporal trees even when every edge appears at most twice.
The next theorem gives a tractable case when the time-labels are locally sparse around every vertex.
It is enough that the total number of time-labels on the edges incident to each vertex is logarithmic in the input size.
\begin{theorem}
  For every fixed constant $c$, the maximum $\Delta$-matching problem can be solved in polynomial time on instances whose underlying graph $G=(V,E)$ is a tree and for which either $L\leq 1$ holds, or
  $
    \sum_{u\in N_G(v)}|\lambda(uv)| \leq c\log L
  $
  for every $v\in V$, where $L=\sum_{e\in E}|\lambda(e)|$ and $N_G(v)$ denotes the set of neighbors of $v$ in $G$.
\end{theorem}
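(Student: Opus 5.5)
The natural approach is to apply Theorem~\ref{thm:PTAS-konstans} with the trivial choice of $K$. Every optimal solution uses at most $|A_v|$ time ticks at each vertex $v$, so we may take $K=B=\max_{v\in V}|A_v|$. The running time is then $O\bigl((2eB)^B B^2 n\bigr)$, which is superpolynomial in general, because $B^B$ grows like $L^{c\log L}$ when $B\approx c\log L$. A direct invocation therefore fails, and the plan is to bound the enumeration more carefully instead.

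The case $L\le 1$ is trivial: at most one time edge exists, so the optimum is $L$. Assume from now on that $\sum_{u\in N_G(v)}|\lambda(uv)|\le c\log L$ for every $v$.

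I will rerun the dynamic program of Theorem~\ref{thm:PTAS-konstans} without any bound $K$, but change what is enumerated at each vertex. Instead of enumerating a tick set $U$, a parent subset $I$ and a partition of $U\setminus I$, I enumerate directly the set $X_v$ of time edges incident to $v$ that the solution selects. This is a subset of all time edges incident to $v$, whose number is
\[
D_v=\sum_{u\in N_G(v)}|\lambda(uv)|\le c\log L .
\]
Hence there are at most $2^{D_v}\le L^{c}$ choices for $X_v$, which is polynomial for fixed $c$.

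For each choice of $X_v$ we proceed as follows.
\begin{itemize}
\item Check $\Delta$-independence of $X_v$ at $v$, i.e.\ that the selected ticks are pairwise at distance at least $\Delta$.
\item Check that the restriction of $X_v$ to the parent edge equals the prescribed set $I$.
\item Add the values $\mathrm{OPT}(v'_j,I_j)$ over the children, where $I_j$ is the restriction of $X_v$ to the edge $vv'_j$.
\end{itemize}
No bipartite matching step is needed any more, because the assignment of ticks to child edges is fixed by $X_v$.

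The states $\mathrm{OPT}(v,I)$ range over subsets $I\subseteq\lambda(vp(v))$. Since $|\lambda(vp(v))|\le D_v\le c\log L$, there are at most $L^c$ of them per vertex. Correctness follows by the same exchange argument as in Theorem~\ref{thm:PTAS-konstans}. The edges at $v$ separate the subtrees, so a solution decomposes into the choice $X_v$ together with independent child subproblems whose only interaction with $v$ is through their parent-edge sets $I_j$. The root is handled with $I=\emptyset$.

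Each state at $v$ is filled by scanning the at most $L^c$ sets $X_v$ and doing $O(D_v)$ work per set. The total running time is therefore about $n\cdot L^{2c}\cdot O(\log L)$, which is polynomial since $n\le L+1$ (every edge has a nonempty label set).

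The main obstacle is the step just described: recognizing that the bound $(2eB)^K$ from Theorem~\ref{thm:PTAS-konstans} is too weak, and that the sparsity hypothesis is stated per vertex on time edges precisely so that subsets of incident time edges, rather than tick sets with partitions, can be enumerated.
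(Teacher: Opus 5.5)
Your proposal is correct and is essentially the paper's proof. The paper also runs the rooted-tree dynamic program with states $\mathrm{OPT}(v,I)$ for $I\subseteq\lambda(vp(v))$. For each state it enumerates tuples $(L_1,\dots,L_k)$ with $L_i\subseteq\lambda(e_i)$, which is exactly your enumeration of the set $X_v$ of selected time edges at $v$. It then checks pairwise $\Delta$-independence at $v$ and bounds the number of state--tuple pairs at $v$ by $2^{b(v)}\le 2^{c\log L}$.

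The only difference is cosmetic: you rescan all sets $X_v$ for each state $I$, giving about $L^{2c}$ work per vertex instead of the paper's joint count $2^{b(v)}$. This is still polynomial for fixed $c$.
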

\begin{proof}
  The case $L\leq 1$ is trivial, so we may assume that $L\geq 2$.
  We solve the problem by dynamic programming over a rooted tree.
  Choose an arbitrary vertex $r\in V$ and root $G$ at $r$.
  For every vertex $v\neq r$, let $p(v)$ denote the parent of $v$.

  For every vertex $v\neq r$ and every subset $I\subseteq\lambda(vp(v))$, let $\mathrm{OPT}(v,I)$ denote the maximum size of a $\Delta$-matching in the subgraph induced by $p(v)$ together with the vertices of the subtree rooted at $v$, subject to the condition that the edge $vp(v)$ is selected exactly at the time ticks in $I$.
  If no such $\Delta$-matching exists, then we set $\mathrm{OPT}(v,I)=-\infty$.
  The selected time edges on $vp(v)$ are counted in $\mathrm{OPT}(v,I)$.

  We compute these values bottom-up.
  Fix a vertex $v\neq r$ and a set $I\subseteq\lambda(vp(v))$, and suppose that all values for the children of $v$ have already been computed.
  If the time edges $(vp(v),\tau)$ with $\tau\in I$ are not pairwise $\Delta$-independent, then $\mathrm{OPT}(v,I)=-\infty$.
  Otherwise, let $v_1,\dots,v_k$ be the children of $v$, and write $e_i=vv_i$ for every $i\in[k]$.

  We enumerate all tuples $(L_1,\dots,L_k)$ such that $L_i\subseteq\lambda(e_i)$ for every $i\in[k]$.
  Such a tuple is feasible for $(v,I)$ if the time edges $(vp(v),\tau)$ with $\tau\in I$ and $(e_i,\sigma)$ with $i\in[k]$ and $\sigma\in L_i$ are pairwise $\Delta$-independent.
  Since all these time edges are incident with $v$, this is exactly the condition that their time ticks differ by at least $\Delta$ for every pair of distinct selected time edges.
  Each feasible tuple gives the candidate value $|I|+\sum_{i=1}^k\mathrm{OPT}(v_i,L_i)$, and candidates involving a value $-\infty$ are ignored.
  We define $\mathrm{OPT}(v,I)$ to be the maximum candidate value.

  The recurrence is correct for the following reason.
  Any candidate gives a feasible solution in the subproblem: feasibility inside each child subtree follows from the corresponding child value, and different child subgraphs meet each other and the parent edge only at $v$.
  Hence all possible conflicts between the pieces occur at $v$, and these are precisely the conflicts excluded by the feasibility condition on the tuple.
  Conversely, let $M$ be any feasible solution counted by $\mathrm{OPT}(v,I)$.
  For every child $v_i$, let $L_i$ be the set of time ticks at which $M$ selects the edge $vv_i$.
  Then $L_i\subseteq\lambda(e_i)$ for every $i$, and the tuple $(L_1,\dots,L_k)$ is feasible for $(v,I)$.
  Once the selected time ticks on $vv_i$ are fixed to be exactly $L_i$, the part of $M$ in the subproblem below $v_i$ contributes at most $\mathrm{OPT}(v_i,L_i)$.
  Therefore the size of $M$ is at most one of the candidate values considered above.
  Thus the recurrence computes the correct value.

  It remains to handle the root.
  Let $v_1,\dots,v_k$ be the children of $r$, and write $e_i=rv_i$ for every $i\in[k]$.
  We enumerate all tuples $(L_1,\dots,L_k)$ with $L_i\subseteq\lambda(e_i)$ for every $i$.
  Such a tuple is feasible at the root if the time edges $(e_i,\tau)$ with $i\in[k]$ and $\tau\in L_i$ are pairwise $\Delta$-independent.
  For every feasible tuple, we take the candidate value $\sum_{i=1}^k\mathrm{OPT}(v_i,L_i)$, again ignoring candidates involving $-\infty$.
  The maximum of these candidates is the optimum value for the original instance, by the same argument as above.

  We now estimate the running time.
  Put $b(v)=\sum_{u\in N_G(v)}|\lambda(uv)|$.
  By assumption, $b(v)\leq c\log L$ for every vertex $v$.
  Since every edge has a non-empty time-label set, every vertex has degree at most $b(v)$.

  For a fixed non-root vertex $v$, the number of possible states $I\subseteq\lambda(vp(v))$ is at most $2^{|\lambda(vp(v))|}$.
  For a fixed state $I$, the number of tuples $(L_1,\dots,L_k)$ that are enumerated is $\prod_{i=1}^k2^{|\lambda(e_i)|}$.
  Hence the total number of state--tuple pairs considered at $v$ is at most
  \[
    2^{|\lambda(vp(v))|}\prod_{i=1}^k2^{|\lambda(e_i)|}
    =
    2^{|\lambda(vp(v))|+\sum_{i=1}^k|\lambda(e_i)|}
    \leq
    2^{b(v)}
    \leq
    2^{c\log L}
    =
    L^{O(c)}.
  \]
  The same estimate applies to the root, except that there is no choice of a subset on a parent edge.

  For each state--tuple pair, feasibility can be checked by comparing all pairs of selected time edges incident with the currently processed vertex.
  There are at most $b(v)\leq c\log L$ such time edges.
  The corresponding candidate value is obtained by summing over the children of $v$, and the number of children is at most $b(v)$.
  Thus all values associated with a fixed vertex can be computed in $L^{O(c)}$ time.

  Since $G$ is a tree and every edge has a non-empty time-label set, we have $|E|\leq L$ and hence $|V|\leq L+1$ whenever $E\neq\emptyset$.
  Therefore the total running time is $L^{O(c)}$, which is polynomial in $L$ because $c$ is fixed.
  Storing the choices attaining the computed values yields an optimal solution within the same asymptotic running time.
  This proves the theorem.
\end{proof}

Applying the reduction from Observation~\ref{obs:gamma-to-delta}, admissible $\gamma$-edges become time edges of a $\Delta$-matching instance with $\Delta=\gamma$.
The local-sparsity condition is preserved under this translation, giving the following consequence.
\begin{corollary}
  For every fixed constant $c$, the maximum $\gamma$-matching problem can be solved in polynomial time on instances whose underlying graph $G=(V,E)$ is a tree and for which either $L_\gamma\leq 1$ holds, or
  $
    |\{(e,\tau)_\gamma : e\in E,\ e\ni v,\ [\tau,\tau+\gamma-1]\subseteq\lambda(e)\}| \leq c\log L_\gamma
  $
  for every $v\in V$, where $L_\gamma$ denotes the total number of $\gamma$-edges in the instance.
\end{corollary}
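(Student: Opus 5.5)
The plan is to apply the reduction of Observation~\ref{obs:gamma-to-delta} and then invoke the preceding theorem on locally sparse $\Delta$-matching instances.

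Dispose of the trivial cases first. If no edge admits a $\gamma$-edge, the empty matching is optimal. If $L_\gamma\le 1$, an optimal solution is the single available $\gamma$-edge, or nothing. So assume $L_\gamma\ge 2$. Build the reduced instance $\mathcal{G}'=(G',\lambda')$ exactly as in Observation~\ref{obs:gamma-to-delta}: keep only the edges admitting at least one $\gamma$-edge, and let $\lambda'(e)$ be the set of starting ticks $\tau$ with $[\tau,\tau+\gamma-1]\subseteq\lambda(e)$. Set $\Delta=\gamma$. Every retained edge then has a non-empty label set, and the reduction is a cardinality-preserving bijection between $\gamma$-matchings and $\Delta$-matchings.

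Next I would check the hypotheses of the preceding theorem on $\mathcal{G}'$.
\begin{itemize}
\item $G'$ is a subgraph of a tree, hence a forest. I will run the algorithm on each component separately; each component is a tree, and optimal values add up over components.
\item The total label count is $L'=\sum_{e}|\lambda'(e)|=L_\gamma$.
\item For each vertex $v$, the sum $\sum_{u\in N_{G'}(v)}|\lambda'(uv)|$ is exactly the number of $\gamma$-edges incident to $v$. By assumption this is at most $c\log L_\gamma$.
\end{itemize}

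The one subtle point is that, on a single component $C$, the theorem's bound is stated relative to that component's own total $L_C\le L_\gamma$, which may be smaller. This makes the hypothesis $b(v)\le c\log L_C$ possibly fail, and $L_C$ could even be $1$. I would avoid this by not citing the theorem verbatim. Instead, I would reuse its dynamic program and running-time analysis. The work at each vertex is bounded by $2^{b(v)}\cdot\mathrm{poly}(b(v))\le L_\gamma^{O(c)}$, and there are at most $L_\gamma+1$ vertices with incident edges. So the total running time is $L_\gamma^{O(c)}$ over all components.

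Alternatively, I could join the components into one tree. For this I would add connecting edges carrying labels placed far beyond the lifetime. This keeps the tree structure but perturbs the label counts slightly, so I prefer the direct argument.

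Finally, $L_\gamma$ is polynomial in the input size, since each $\gamma$-edge corresponds to a distinct label in the input. Translating the optimal $\Delta$-matching back through the bijection then yields an optimal $\gamma$-matching in polynomial time.
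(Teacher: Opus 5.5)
Your proposal is correct and takes the paper's route. You reduce via Observation~\ref{obs:gamma-to-delta}, so the total label count becomes $L_\gamma$ and each local sum becomes the number of $\gamma$-edges at $v$, and you then run the locally sparse dynamic program; the paper only cites the theorem and does not address that the reduced graph is merely a forest, so your choice to reuse the dynamic program and its $2^{b(v)}\poly(b(v))\le L_\gamma^{O(c)}$ per-vertex bound, rather than applying the theorem componentwise where $c\log L_C$ may be too small, is more careful than the paper. One harmless slip: in a forest the number of non-isolated vertices is the number of edges plus the number of non-trivial components, so it is at most $2L_\gamma$ rather than $L_\gamma+1$, and the $L_\gamma^{O(c)}$ total is unaffected.
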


We close the section with the $d$-matching problem.
By Lemma~\ref{lem:d-delta-cor}, the single-appearance result for $\Delta$-matchings applies directly when the input graph of the $d$-matching instance is a tree.
This yields the following polynomial-time solvability result for $d$-matchings.

\begin{theorem}\label{thm:distance-tree-poly}
The maximum $d$-matching problem can be solved in polynomial time if the input bipartite graph $G=(S,T,E)$ is a tree.
\end{theorem}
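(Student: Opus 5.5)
The plan is to combine Lemma~\ref{lem:d-delta-cor} with Theorem~\ref{thm:single-appear-theorem}. Given an instance $G=(S,T,E)$, $S=\{s_1,\dots,s_n\}$, $d\in\mathbb{Z}_{>0}$, where $G$ is a tree, build the temporal graph $\mathcal{G}=(G,\lambda)$ with $\lambda(s_it)=\{i\}$ and set $\Delta=d$, exactly as in the proof of Lemma~\ref{lem:d-delta-cor}. That lemma shows that $d$-matchings of $G$ and $\Delta$-matchings of $\mathcal{G}$ are in cardinality-preserving one-to-one correspondence. The correspondence is explicit: a time edge $(s_it,i)$ is identified with the edge $s_it$. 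In particular, maximum values agree and optimal solutions translate back directly.

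Next, check that the hypotheses of Theorem~\ref{thm:single-appear-theorem} hold. The underlying graph of $\mathcal{G}$ is $G$ itself, which is a tree by assumption. Every edge receives exactly one time-label, and this label is a positive integer, as the temporal-graph definition requires. Two degenerate cases should be handled separately:
\begin{itemize}
\item if $E=\emptyset$, the tree is a single vertex and the answer is $0$;
\item otherwise every vertex has an incident edge, so no isolated vertices remain.
\end{itemize}
The construction takes $O(|E|)$ time, and the lifetime is at most $n$, so all numbers stay polynomially bounded.

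Finally, run the dynamic program of Theorem~\ref{thm:single-appear-theorem} on $\mathcal{G}$, which takes polynomial time. Then map the resulting optimal $\Delta$-matching back through the correspondence to get an optimal $d$-matching.

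I expect no substantive obstacle. The only point needing care is verifying the correspondence of Lemma~\ref{lem:d-delta-cor} at vertices of $S$. Two edges at $s_i$ share the tick $i$, so their difference is $0<\Delta$, and they are never compatible; this is exactly the requirement that $M$ covers each vertex of $S$ at most once. At a vertex $t\in T$, two edges $s_it$ and $s_jt$ are compatible iff $|i-j|\geq d=\Delta$, which is the $d$-matching condition. This holds for every $d\geq1$; when $d=1$ the problem reduces to ordinary matching, which is consistent.
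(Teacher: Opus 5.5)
Your proposal is correct and follows the paper's proof exactly: build $\lambda(s_it)=\{i\}$ with $\Delta=d$ as in Lemma~\ref{lem:d-delta-cor}, apply Theorem~\ref{thm:single-appear-theorem} to the resulting single-appearance temporal tree, and translate the optimal solution back. One harmless slip in your closing aside: for $d=1$ the condition at $T$ is vacuous, so the problem is not ordinary matching but the trivial one in which each non-isolated $s_i$ simply takes one incident edge. Your argument never uses that remark.
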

\begin{proof}
  Let $S=\{s_1,\dots,s_n\}$ be the given ordering of the vertices in $S$.
  We use the reduction from Lemma~\ref{lem:d-delta-cor}, that is, we set $\Delta=d$ and construct a temporal graph $\mathcal{G}=(G,\lambda)$ on the same underlying graph by setting $\lambda(s_i t)=\{i\}$ for every edge $s_i t\in E$.
  Since $G$ is a tree, the underlying graph of $\mathcal{G}$ is a tree, and every edge appears exactly once.

  As observed in Lemma~\ref{lem:d-delta-cor}, feasible $d$-matchings in $G$ are in one-to-one correspondence with $\Delta$-matchings in $\mathcal{G}$, and this correspondence preserves cardinality.
  By Theorem~\ref{thm:single-appear-theorem}, a maximum $\Delta$-matching in $\mathcal{G}$ can be found in polynomial time.
  Translating such a solution back gives a maximum-cardinality $d$-matching in $G$.
\end{proof}

\section{Polynomial-time approximation schemes}\label{sec:ptas-subsection}

It is known that, on general temporal graphs, the maximum $\Delta$-matching problem is APX-hard already for $\Delta=2$ and $\mathcal{T}=3$~\cite{Delta1}. Hence, unless $\mathrm{P}=\mathrm{NP}$, no PTAS exists in that general setting.
In this section, we show that the problem admits a PTAS when the underlying graph of the input temporal graph is a tree.
The same approach also gives a PTAS for the maximum $\gamma$-matching problem on trees.
Although the algorithm could also be applied to the maximum $d$-matching problem, this is unnecessary here, since Theorem~\ref{thm:distance-tree-poly} solves that problem exactly on trees.

Fix an input temporal graph $\mathcal{G}=(G,\lambda)$ whose underlying graph $G=(V,E)$ is a tree, and let $\mathcal{T}$ be its lifetime.
We first dispose of the case $\Delta=1$.
If $\Delta=1$, then conflicts occur only between time edges sharing an endpoint at the same time tick.
Thus the problem can be solved exactly by computing a maximum matching independently in each active snapshot and taking the union of these matchings.
Hence, in the rest of the section, we assume that $\Delta\geq 2$.

The algorithm fixes a window length $k$, chosen as a function of $\Delta$ and $\varepsilon$, and considers shifted periodic choices of windows of length at most $k$ separated by $\Delta-1$ unused time ticks.
For each relevant choice, we solve the problem exactly inside every window and take the union of the obtained solutions.
The separation between consecutive windows ensures that this union is feasible.
The approximation guarantee follows from a counting argument over the full family of shifted choices, while the algorithm needs to inspect only a small representative subfamily.

We now introduce the required terminology, following~\cite{Delta1}.
Let $k$ be a positive integer with $\Delta\leq k\leq \mathcal{T}$, and set $p=k+\Delta-1$.
For each $\tau\in [\mathcal{T}-k+1]$, the \emph{$k$-window} $W_\tau$ is the interval $[\tau,\tau+k-1]$.
An interval of length at most $k-1$ that either begins at tick $1$ or ends at tick $\mathcal{T}$ is called a \emph{partial $k$-window}.

For an offset $a\in[0,p-1]$, declare a time tick $\tau\in[\mathcal{T}]$ to be \emph{covered} if the residue of $\tau-a$ modulo $p$ belongs to $[0,k-1]$.
The corresponding \emph{$k$-template} $\mathcal{S}_a$ is the family of maximal covered intervals after restricting to the lifetime $[\mathcal{T}]$.
Since $\Delta\geq 2$, we have $p>k$.
Thus every interval in a $k$-template has length at most $k$, intervals of length $k$ are $k$-windows, and shorter intervals can occur only at the endpoints and are partial $k$-windows.
Moreover, consecutive intervals of a template are separated by exactly $\Delta-1$ uncovered time ticks.
This definition gives one $k$-template for each of the $k+\Delta-1$ possible offsets; no maximum-cardinality condition is imposed.

For a $k$-template $\mathcal{S}$ and an interval $W\in\mathcal{S}$, let $M_W$ be a maximum $\Delta$-matching using only time edges whose ticks lie in $W$.
Let $M^{\mathcal{S}}=\bigcup_{W\in\mathcal{S}}M_W$.
This union is a feasible solution to the original problem because the intervals of $\mathcal{S}$ are disjoint and separated by $\Delta-1$ unused time ticks.
We say that a time tick is \emph{covered} by $\mathcal{S}$ if it belongs to an interval of $\mathcal{S}$.

The next lemma records the two properties of $k$-templates used in the approximation analysis.
It is a rephrasing of Lemma~24 in~\cite{Delta1} for our setting.

\begin{lemma}\label{lem:PTAS-lemma-1}
Let $k$ and $\mathcal{T}$ be integers such that $\Delta\leq k\leq \mathcal{T}$, with $\Delta\geq 2$.
Then there are exactly $k+\Delta-1$ different $k$-templates with respect to lifetime $\mathcal{T}$, and every time tick in $[\mathcal{T}]$ is covered by exactly $k$ of them.
\end{lemma}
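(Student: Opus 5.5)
The statement has two parts: counting the templates, and showing each tick is covered by exactly $k$ of them. Both follow from the definition of $\mathcal{S}_a$ via the offset $a\in[0,p-1]$, where $p=k+\Delta-1$. Under this definition there are exactly $p$ offsets, each giving one $k$-template. So the count $k+\Delta-1$ amounts to showing that distinct offsets give distinct templates, that is, distinct families of maximal covered intervals.

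For the covering count, fix a tick $\tau\in[\mathcal{T}]$. Then $\tau$ is covered by $\mathcal{S}_a$ exactly when $(\tau-a)\bmod p\in[0,k-1]$. As $a$ ranges over $[0,p-1]$, the residue $(\tau-a)\bmod p$ ranges bijectively over $[0,p-1]$. Hence exactly $k$ offsets place this residue in $[0,k-1]$. Restricting to the lifetime does not change whether $\tau$ itself is covered, since $\tau\in[\mathcal{T}]$. So $\tau$ is covered by exactly $k$ templates.

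For distinctness, I compare the covered sets $C_a=\{\tau\in[\mathcal{T}]:(\tau-a)\bmod p\in[0,k-1]\}$, since a template determines its covered set as the union of its intervals. Take $a\neq a'$ in $[0,p-1]$. Consider the set $U_a=\{\tau\in\mathbb{Z}:(\tau-a)\bmod p\in[0,k-1]\}$ on all of $\mathbb{Z}$. It is periodic with period $p$ and consists of covered runs of length $k$ separated by uncovered runs of length $\Delta-1\geq1$. The runs of $U_a$ start exactly at the ticks $\equiv a\pmod p$, so $U_a\neq U_{a'}$ as subsets of $\mathbb{Z}$. The remaining task is to show they already differ inside $[\mathcal{T}]$, using $\mathcal{T}\geq k$.

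This is the main obstacle, because the window $[\mathcal{T}]$ can be shorter than one full period $p$. I would handle it by looking at tick $1$ and its neighbourhood. Any interval of $k+1$ consecutive integers contains at least one uncovered tick of $U_a$, and every uncovered run of $U_a$ has length exactly $\Delta-1$. I would locate within $[1,k]\subseteq[\mathcal{T}]$ the first uncovered tick, or else the covered run containing tick $1$, and argue that this data recovers $a\bmod p$. Concretely, if tick $1$ is covered then the covered run through it starts at $1-((1-a)\bmod p)$. How far this run extends into $[\mathcal{T}]$ is $k-((1-a)\bmod p)$ ticks, which determines $a$ whenever $(1-a)\bmod p\in[0,k-1]$. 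If tick $1$ is uncovered, the position at which the first covered run begins inside $[1,\Delta]\subseteq[1,k]$ determines $a$. In either case $C_a$ determines $a$, which gives exactly $k+\Delta-1$ distinct templates.
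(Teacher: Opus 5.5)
Your proof is correct and follows the paper's route: one template per offset, the covering count via the bijection $a\mapsto(\tau-a)\bmod p$, and distinctness by using $k\le\mathcal{T}$ to see the covered pattern on $[1,k]$ and recovering the offset from it. The only difference is in that last step. The paper argues that the cyclic interval of $\Delta-1$ uncovered residues is determined by its trace on $k$ consecutive residues, whereas you decode $a$ directly from the run through tick $1$: its right end $k-((1-a)\bmod p)$ if tick $1$ is covered, and otherwise the first covered tick, which lies in $[2,\Delta]$. This makes the step more explicit than the paper's rather terse uniqueness claim.
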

\begin{proof}
  Put $p=k+\Delta-1$.
  There is one template for each offset $a\in[0,p-1]$.
  We first show that these templates are distinct.
  For an offset $a$, the uncovered residues modulo $p$ form a cyclic interval of length $\Delta-1$.
  Since $k\leq\mathcal{T}$, the template determines which of the ticks in $[k]$ are uncovered.
  A cyclic interval of length $\Delta-1$ in a cycle of length $p=k+\Delta-1$ is uniquely determined by its intersection with $k$ consecutive residues, because the complement of those residues is itself a cyclic interval of length $\Delta-1$.
  Hence two offsets giving the same template have the same uncovered residues modulo $p$, and therefore the same offset.
  Thus there are exactly $p=k+\Delta-1$ different templates.

  Now fix a time tick $\tau\in[\mathcal{T}]$.
  As $a$ ranges over $[0,p-1]$, the residue of $\tau-a$ modulo $p$ ranges over all residues modulo $p$.
  Exactly $k$ of these residues belong to $[0,k-1]$.
  Hence exactly $k$ templates cover~$\tau$.
\end{proof}

\begin{lemma}\label{lem:template-representatives}
  Let $\Delta\geq 2$ and let $k,\mathcal{T}$ be integers with $\Delta \leq k \leq \mathcal{T}$.
  Set $p=k+\Delta-1$, and let $A=\bigcup_{e\in E}\lambda(e)$ be the non-empty set of active time ticks.
  Then there is a set $R\subseteq[0,p-1]$ of offsets of size $O(|A|)$ such that for every offset $a\in[0,p-1]$, there exists $r\in R$ with the following property: an active time tick $\tau\in A$ is covered by $\mathcal{S}_a$ if and only if it is covered by~$\mathcal{S}_r$.
  Moreover, the set $R$ can be computed in polynomial time in $L=\sum_{e\in E}|\lambda(e)|$, and has size $O(L)$.
  Finally, any two offsets $a,r\in[0,p-1]$ that cover the same active time ticks satisfy $|M^{\mathcal{S}_a}|=|M^{\mathcal{S}_r}|$.
\end{lemma}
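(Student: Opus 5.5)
The plan is to view, for each active tick $\tau\in A$, the set of offsets covering it as a cyclic interval of residues, and then to show that coverage of $A$ is piecewise constant in $a$ with few breakpoints.

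By definition, $\tau$ is covered by $\mathcal{S}_a$ exactly when $(\tau-a)\bmod p\in[0,k-1]$. Equivalently, $a\bmod p$ lies in the cyclic interval $C_\tau=\{\tau-k+1,\dots,\tau\}\bmod p$ of length $k$. As $a$ runs over $[0,p-1]$, the indicator of $a\in C_\tau$ can change value only when $a$ crosses one of the two boundary points of $C_\tau$. More precisely, it changes only at $a\equiv\tau+1$ (leaving $C_\tau$) or $a\equiv\tau-k+1$ (entering $C_\tau$), taken modulo $p$.

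Let $B=\{0\}\cup\{(\tau+1)\bmod p,\ (\tau-k+1)\bmod p:\tau\in A\}$. Then $|B|\le 2|A|+1=O(|A|)$. The offsets in $[0,p-1]$ split into maximal runs of consecutive integers $[b,b'-1]$, where $b$ and the next element $b'$ are consecutive in the sorted order of $B$, and the last run ends at $p-1$. Within one run, no $C_\tau$ boundary is crossed, so the set of covered active ticks is constant. We then take $R=B$. For every offset $a$, let $r$ be the largest element of $B$ with $r\le a$; such an $r$ exists because $0\in B$. Then $a$ and $r$ lie in the same run, so they cover the same active ticks.

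For the computation, $B$ is obtained by $O(|A|)$ modular reductions followed by sorting. Since $|A|\le L$, this takes time polynomial in $L$ and gives $|R|=O(L)$.

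For the last claim, I will argue that $M^{\mathcal{S}_a}$ depends only on the active ticks covered. Each window solution $M_W$ uses only time edges whose ticks lie in $W\cap A$. Consider two covered active ticks in different intervals of $\mathcal{S}_a$. They are separated by at least $\Delta-1$ uncovered ticks, so they differ by at least $\Delta$, and time edges in different windows never conflict. It follows that $|M^{\mathcal{S}_a}|$ equals the maximum size of a $\Delta$-matching restricted to time edges with ticks in $A_a=A\cap\bigcup\mathcal{S}_a$, subject to the additional condition that pairs of ticks in the same window respect $\Delta$.

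The main obstacle is that the window grouping itself must coincide for $a$ and $r$, not merely the set $A_a$. To handle it, I will show that two covered active ticks $\tau<\tau'$ lie in the same interval of $\mathcal{S}_a$ if and only if every tick in $[\tau,\tau']$ is covered. When $\tau'-\tau\ge p$ this fails for every offset, since an uncovered gap must appear in between. When $\tau'-\tau<p$, it holds exactly when the residues of $\tau-a,\dots,\tau'-a$ stay inside $[0,k-1]$ without wrapping around, and this is equivalent to $\tau'-\tau\le k-1$ together with both $\tau$ and $\tau'$ being covered with $(\tau-a)\bmod p\le(\tau'-a)\bmod p$.

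In fact, the separation argument already makes the grouping irrelevant. The union $M^{\mathcal{S}_a}$ is a maximum $\Delta$-matching on the time edges with ticks in $A_a$. Any such matching splits along the windows, and conversely any union of window matchings is feasible. Hence $|M^{\mathcal{S}_a}|$ is determined by $A_a$ alone, and the equality $|M^{\mathcal{S}_a}|=|M^{\mathcal{S}_r}|$ follows.
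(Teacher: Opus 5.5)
Your proposal is correct and matches the paper's proof. The change points $\tau+1$ and $\tau-k+1$ modulo $p$ cut the offsets into blocks of constant coverage; adding $0$ to linearize the cyclic block and taking left endpoints as representatives is a harmless variant. The last claim is proved as in the paper, by noting that $M^{\mathcal{S}_a}$ is a maximum $\Delta$-matching over all time edges at covered active ticks, so the detour about the window grouping coinciding for $a$ and $r$ is unnecessary, as you conclude yourself, and can be dropped.
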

\begin{proof}
Fix an active time tick $\tau\in A$.
By the definition of the template $\mathcal{S}_a$, the tick $\tau$ is covered by $\mathcal{S}_a$ if and only if the residue of $\tau-a$ modulo $p$ belongs to $[0,k-1]$.
Equivalently, the offsets that cover $\tau$ are precisely the residues in $[\tau-k+1,\tau]$ modulo $p$.
Thus, as the offset varies cyclically over $[0,p-1]$, the covered or uncovered status of $\tau$ can change only when this cyclic block of covering offsets is entered or left.

For every $\tau\in A$, add the two residues $\tau-k+1 \pmod p$ and $\tau+1 \pmod p$ to a set $C$ of change points.
Let $C=\{c_1,\dots,c_s\}$, where $0\leq c_1<\dots<c_s\leq p-1$.
Consider the blocks of offsets $[c_i,c_{i+1}-1]$ for $i\in[s-1]$, together with the cyclic block $[c_s,p-1]\cup[0,c_1-1]$.
These blocks partition $[0,p-1]$.
For every active time tick $\tau\in A$, its covered or uncovered status is constant on each block, because all offsets at which this status can change are contained in $C$ and each block ends immediately before the next change point.
Choose one offset from each block and put it into $R$.
Now let $a\in[0,p-1]$ be arbitrary, and let $Q$ be the block containing $a$.
By construction, $R$ contains some offset $r\in Q$.
Since the covered or uncovered status of every active time tick is constant on $Q$, an active time tick $\tau\in A$ is covered by $\mathcal{S}_a$ if and only if it is covered by $\mathcal{S}_r$.
Thus $r$ is a representative for $a$.
Since there are exactly $|C|$ blocks, we have $|R|=|C|\leq 2|A|=O(|A|)$.
The construction only requires listing the active time ticks, computing residues modulo $p$, and sorting at most $2|A|$ residues.
Since $|A|\leq L$, this can be carried out in polynomial time in $L$.

It remains to prove the final assertion.
Suppose that $\mathcal{S}_a$ and $\mathcal{S}_r$ cover the same active time ticks.
Then they allow exactly the same time edges, because time edges exist only at active time ticks.
For any template $\mathcal{S}$, the matching $M^{\mathcal{S}}$ is a maximum $\Delta$-matching among all time edges whose time ticks are covered by $\mathcal{S}$.
Indeed, the time intervals of $\mathcal{S}$ are pairwise disjoint, and consecutive time intervals are separated by exactly $\Delta-1$ uncovered time ticks.
Hence any time edge in one time interval is $\Delta$-independent from any time edge in another time interval.
Therefore the optimum over the covered time edges is the sum of the optima inside the time intervals of $\mathcal{S}$, which is exactly the value of $M^{\mathcal{S}}$.
Since $\mathcal{S}_a$ and $\mathcal{S}_r$ allow the same set of time edges, they have the same optimum value over the covered time edges.
Consequently, $|M^{\mathcal{S}_a}|=|M^{\mathcal{S}_r}|$.
\end{proof}

For a fixed offset $r$, we compute $M^{\mathcal{S}_r}$ as follows.
It is enough to construct the intervals of $\mathcal{S}_r$ that contain at least one active time tick, since all other intervals contain no time edges and therefore contribute nothing.
Scan the active time ticks.
If an active time tick $\tau$ is covered by $\mathcal{S}_r$, and if $h$ is the residue of $\tau-r$ modulo $p$, then $h\in[0,k-1]$, and the untruncated covered time interval containing $\tau$ starts at $\tau-h$ and ends at $\tau-h+k-1$.
After restricting to the lifetime $[\mathcal{T}]$, this gives the interval $[\max\{1,\tau-h\},\min\{\mathcal{T},\tau-h+k-1\}]$.
Different active time ticks in the same interval of $\mathcal{S}_r$ may produce the same interval, so we keep only one copy of each interval obtained.
The resulting set is exactly the set of intervals of $\mathcal{S}_r$ that contain at least one active time tick.

For each such interval $W=[a,b]$, form a restricted temporal graph as follows.
Its underlying graph is the subgraph of $G$ consisting exactly of those edges $e$ for which $\lambda(e)\cap W\neq\emptyset$.
For each such edge $e$, replace its time-label set by $\{\tau-a+1: \tau\in\lambda(e)\cap W\}$.
Thus every edge in the restricted temporal graph has a non-empty time-label set.
This translation preserves all time differences inside $W$.
The resulting temporal graph has lifetime at most $k$, and its underlying graph is a forest.
Moreover, in any feasible $\Delta$-matching inside $W$, a fixed vertex can be incident to at most $\lceil k/\Delta\rceil$ selected time edges, because all selected time edges incident to that vertex must have pairwise time ticks at distance at least $\Delta$.
For the original instance, let $A_v=\bigcup_{e\ni v}\lambda(e)$, $B=\max_{v\in V}|A_v|$, and $n=|V|$.
For the restricted instance on $W$, the maximum number of active time ticks incident to a vertex is $B_W=\max_{v\in V}|A_v\cap W|$.
Since $A_v\cap W\subseteq A_v$ for every $v$, we have $B_W\leq B$.
Thus Theorem~\ref{thm:PTAS-konstans}, applied with $K=\lceil k/\Delta\rceil$, can be used on each component of the forest to compute $M_W$ in $O\bigl((2eB)^{\lceil k/\Delta\rceil}\lceil k/\Delta\rceil^2 n\bigr)$ time.
Taking the union of these matchings over all relevant intervals $W\in\mathcal{S}_r$ gives $M^{\mathcal{S}_r}$.

This gives the exact solution associated with a fixed representative offset.
We now combine this computation with Lemmas~\ref{lem:PTAS-lemma-1} and~\ref{lem:template-representatives}.

\begin{theorem}\label{thm:ptas-delta}
Let $\mathcal{G}=(G,\lambda)$ be a temporal graph whose underlying graph $G=(V,E)$ is a tree, and let $n=|V|$.
Write $L=\sum_{e\in E}|\lambda(e)|$ and $B=\max_{v\in V}|A_v|$, where $A_v=\bigcup_{e\ni v}\lambda(e)$.
For every $0<\varepsilon<1$, the maximum $\Delta$-matching problem on $\mathcal{G}$ admits a $(1-\varepsilon)$-approximation algorithm with running time
$
  O\bigl(L^2(2eB)^{\lceil 1/\varepsilon\rceil}\lceil 1/\varepsilon\rceil^2 n\bigr).
$
In particular, since $B\leq L$, the running time is $L^{O(1/\varepsilon)}\poly(n)$, and the problem admits a PTAS.
\end{theorem}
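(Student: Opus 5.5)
We choose the window length $k=\Delta\lceil 1/\varepsilon\rceil$, so that $K=\lceil k/\Delta\rceil=\lceil 1/\varepsilon\rceil$ in Theorem~\ref{thm:PTAS-konstans}. If $k>\mathcal{T}$, we instead solve the whole instance exactly. Since every vertex uses at most $\lceil\mathcal{T}/\Delta\rceil\leq\lceil 1/\varepsilon\rceil$ ticks, Theorem~\ref{thm:PTAS-konstans} applies directly with the same $K$. (The case $\Delta=1$ was already handled exactly.) From now on we assume $\Delta\leq k\leq\mathcal{T}$.

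The algorithm first computes the representative offset set $R$ of Lemma~\ref{lem:template-representatives}, which has size $O(L)$. For each $r\in R$ it computes $M^{\mathcal{S}_r}$ by the window procedure described above and outputs the largest one. Feasibility of every $M^{\mathcal{S}_r}$ was already noted: the windows of a template are separated by $\Delta-1$ uncovered ticks.

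For the approximation guarantee, let $M^*$ be an optimal $\Delta$-matching. For an offset $a$, the set of time edges of $M^*$ whose ticks are covered by $\mathcal{S}_a$ is a feasible solution using only covered ticks. Its restriction to each window $W\in\mathcal{S}_a$ is a $\Delta$-matching inside $W$, so $|M^{\mathcal{S}_a}|$ is at least the number of edges of $M^*$ covered by $\mathcal{S}_a$. Summing over all $p=k+\Delta-1$ offsets and applying Lemma~\ref{lem:PTAS-lemma-1}, each time edge is counted exactly $k$ times. Hence
\[
\sum_{a=0}^{p-1}|M^{\mathcal{S}_a}|\geq k|M^*|,
\]
so some offset $a$ satisfies $|M^{\mathcal{S}_a}|\geq \frac{k}{k+\Delta-1}|M^*|$. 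By Lemma~\ref{lem:template-representatives}, a representative $r\in R$ achieves the same value. It remains to bound the ratio:
\[
\frac{k}{k+\Delta-1}\geq 1-\frac{\Delta}{k}=1-\frac{1}{\lceil 1/\varepsilon\rceil}\geq 1-\varepsilon.
\]

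For the running time, each offset involves $O(L)$ relevant windows, since each window contains an active tick. Each window is solved in $O\bigl((2eB)^{K}K^2 n\bigr)$ time via Theorem~\ref{thm:PTAS-konstans}. Computing the windows and restricted instances adds only polynomial overhead. Multiplying by $|R|=O(L)$ gives $O\bigl(L^2(2eB)^{\lceil1/\varepsilon\rceil}\lceil1/\varepsilon\rceil^2n\bigr)$.

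The main point requiring care is bookkeeping in the window solving. Restricted instances are forests, so we sum over components, and the per-component $n$ values add up to at most $n$. Edges appearing in several windows are split correctly by the restriction. The bounded-local-use hypothesis holds automatically inside a window of length $k$ with $K=\lceil k/\Delta\rceil$, which follows from the pairwise separation $\Delta$ of ticks at a vertex. The counting argument itself is routine given Lemma~\ref{lem:PTAS-lemma-1}.
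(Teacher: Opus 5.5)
Your proof is correct and follows the paper's argument: shifted $k$-templates, the representative offsets of Lemma~\ref{lem:template-representatives}, the exact window solver of Theorem~\ref{thm:PTAS-konstans} with $K\leq\lceil 1/\varepsilon\rceil$, and averaging over all $k+\Delta-1$ offsets via Lemma~\ref{lem:PTAS-lemma-1}. The differences are minor. Your choice $k=\Delta\lceil 1/\varepsilon\rceil$ is simpler than the paper's $k=\max\{\Delta,\lceil(1-\varepsilon)(\Delta-1)/\varepsilon\rceil\}$. You also leave implicit why the $O(L)$ per-window construction cost fits the stated bound; the paper justifies this via $L\leq(n-1)B$.
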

\begin{proof}
If $E=\emptyset$, then there are no time edges and the empty matching is optimal, so we are done.
Thus, since all time-label sets are non-empty, we have $L\geq 1$.
The case $\Delta=1$ is solved exactly by computing a maximum matching independently at each active time tick, as described above.
Since each snapshot is a forest and there are at most $L$ active time ticks, this can be done in polynomial time bounded by the displayed running time.
Assume, therefore, that $\Delta\geq 2$.
Let $A=\bigcup_{e\in E}\lambda(e)$ be the set of active time ticks.
Set $k=\max\{\Delta,\lceil(1-\varepsilon)(\Delta-1)/\varepsilon\rceil\}$, and let $p=k+\Delta-1$.
Then $k\leq \Delta/\varepsilon$, and hence $\lceil k/\Delta\rceil\leq \lceil 1/\varepsilon\rceil$.

If $\mathcal{T}<k$, then any feasible solution uses at most $\lceil \mathcal{T}/\Delta\rceil\leq \lceil k/\Delta\rceil$ time edges incident to any fixed vertex.
Thus Theorem~\ref{thm:PTAS-konstans}, applied with $K=\lceil k/\Delta\rceil$, solves the instance exactly in time $O\bigl((2eB)^{\lceil k/\Delta\rceil}\lceil k/\Delta\rceil^2 n\bigr)\leq O\bigl((2eB)^{\lceil 1/\varepsilon\rceil}\lceil 1/\varepsilon\rceil^2 n\bigr)$.
Since $L\geq 1$, this is bounded by the displayed running time.

We now assume that $k\leq\mathcal{T}$.
Compute the representative set $R$ from Lemma~\ref{lem:template-representatives}.
For every $r\in R$, compute $M^{\mathcal{S}_r}$ as described above, considering only intervals of $\mathcal{S}_r$ that contain at least one active time tick.
Let $M$ be a largest matching among these solutions.
The set $M$ is feasible, because it is equal to $M^{\mathcal{S}_r}$ for some offset $r$.
By Lemma~\ref{lem:template-representatives}, the algorithm considers only $O(L)$ representative offsets.
The set $R$ is constructed by sorting at most $2|A|\leq 2L$ residues.
For each representative offset, scanning the active time ticks produces at most $L$ relevant intervals, possibly with repetitions.
Thus the total number of interval descriptors produced over all representative offsets is $O(L^2)$.
For each such interval $W$, we construct the restricted instance by scanning the input time edges, keeping exactly those edges with at least one time tick in $W$, and translating the kept time ticks as described above.
This costs $O(L)$ time per interval.
Because the underlying graph is a tree, every edge $e=uv$ satisfies $|\lambda(e)|\leq |A_u|\leq B$, and hence $L\leq (n-1)B$.
Together with $\lceil 1/\varepsilon\rceil\geq 1$ and $B\geq 1$, this implies that the $O(L)$ construction cost is bounded by $O\bigl((2eB)^{\lceil 1/\varepsilon\rceil}\lceil 1/\varepsilon\rceil^2 n\bigr)$.

For each interval $W$, every feasible $\Delta$-matching inside $W$ selects at most $\lceil k/\Delta\rceil\leq \lceil 1/\varepsilon\rceil$ time edges incident to any fixed vertex.
For the restricted instance on $W$, the maximum number of active time ticks incident to a vertex is $B_W=\max_{v\in V}|A_v\cap W|$, and $B_W\leq B$.
By Theorem~\ref{thm:PTAS-konstans}, applied componentwise to the forest obtained from $W$, the subproblem for $W$ can be solved in $O\bigl((2eB)^{\lceil 1/\varepsilon\rceil}\lceil 1/\varepsilon\rceil^2 n\bigr)$ time.
There are $O(L^2)$ interval subproblems in total.
Hence the construction of $M$ takes $O\bigl(L^2(2eB)^{\lceil 1/\varepsilon\rceil}\lceil 1/\varepsilon\rceil^2 n\bigr)$ time.
The $O(L^2)$ time needed to construct the representative offsets and the interval descriptors is absorbed in this bound.
This gives the running time stated in the theorem.

\medskip
It remains to prove that $M$ has the desired size.
Let $M^*$ be an optimal solution.
Let $\mathcal{C}$ be the family of all $k$-templates.
By Lemma~\ref{lem:template-representatives}, for every offset $a\in[0,p-1]$ there exists $r\in R$ such that $|M^{\mathcal{S}_a}|=|M^{\mathcal{S}_r}|$.
Hence $|M|\geq |M^{\mathcal{S}}|$ for every $\mathcal{S}\in\mathcal{C}$.
For any set $X$ of time edges, let $X_\tau$ denote the elements of $X$ at time tick $\tau$.
For every $\mathcal{S}\in\mathcal{C}$, the intervals of $\mathcal{S}$ are disjoint, so $|M^{\mathcal{S}}|=\sum_{W\in\mathcal{S}}|M_W|$.
For every interval $W\in\mathcal{S}$, the time edges of $M^*$ whose ticks lie in $W$ form a feasible solution to the subproblem defining $M_W$, and therefore $|M_W|\geq \sum_{\tau\in W}|M^*_\tau|$.
Using Lemma~\ref{lem:PTAS-lemma-1}, we get
\[
(k+\Delta-1)|M|
\geq \sum_{\mathcal{S}\in\mathcal{C}}|M^{\mathcal{S}}|
\geq \sum_{\mathcal{S}\in\mathcal{C}}\sum_{W\in\mathcal{S}}\sum_{\tau\in W}|M^*_\tau|
= k|M^*|.
\]
Therefore $|M|\geq k|M^*|/(k+\Delta-1)$.
By the choice of $k$, we have $(\Delta-1)/(k+\Delta-1)\leq \varepsilon$, and hence $|M|\geq (1-\varepsilon)|M^*|$, thereby proving the desired approximation guarantee.
\end{proof}

The PTAS for $\Delta$-matching also yields the corresponding approximation scheme for $\gamma$-matching, via the reduction from Section~\ref{sec:relations-subsection}.

\begin{corollary}\label{cor:gamma-ptas}
Let $\mathcal{G}=(G,\lambda)$ be a temporal graph whose underlying graph $G=(V,E)$ is a tree, and let $n=|V|$.
Write $L=\sum_{e\in E}|\lambda(e)|$.
For every $0<\varepsilon<1$ and every $\gamma\in\mathbb{Z}_{>0}$, the maximum $\gamma$-matching problem on $\mathcal{G}$ admits a $(1-\varepsilon)$-approximation algorithm with running time $L^{O(1/\varepsilon)}\poly(n)$.
In particular, it admits a PTAS.
\end{corollary}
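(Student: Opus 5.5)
The plan is to reduce to Theorem~\ref{thm:ptas-delta} via Observation~\ref{obs:gamma-to-delta}. Given $\mathcal{G}=(G,\lambda)$ and $\gamma$, first list all $\gamma$-edges $(e,\tau)_\gamma$ with $[\tau,\tau+\gamma-1]\subseteq\lambda(e)$. For each edge $e$, these are found by scanning the sorted set $\lambda(e)$ for runs of consecutive ticks. Each run of length $q\geq\gamma$ yields $q-\gamma+1$ starting ticks, so edge $e$ has at most $|\lambda(e)|$ $\gamma$-edges. The total number $L'$ of time edges in the reduced instance therefore satisfies $L'\leq L$, and the construction takes polynomial time in $L$.

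If no edge admits a $\gamma$-edge, the empty matching is optimal. Otherwise, form the reduced temporal graph $\mathcal{G}'$ of Observation~\ref{obs:gamma-to-delta}, with $\Delta=\gamma$, on the edges admitting at least one $\gamma$-edge. Its underlying graph is a subforest of $G$. Theorem~\ref{thm:ptas-delta} is stated for trees, so I apply it separately to each non-trivial component; isolated vertices contribute nothing. Since $\Delta$-independence involves only time edges sharing a vertex, and hence the same component, a $\Delta$-matching of $\mathcal{G}'$ is exactly a union of $\Delta$-matchings of its components. Consequently both the optimum and a $(1-\varepsilon)$-approximate solution add up over components, and the union of the per-component approximations is a $(1-\varepsilon)$-approximation for $\mathcal{G}'$.

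Observation~\ref{obs:gamma-to-delta} gives a cardinality-preserving bijection between $\gamma$-matchings of $\mathcal{G}$ and $\Delta$-matchings of $\mathcal{G}'$. Mapping the approximate solution back therefore yields a $\gamma$-matching of size at least $(1-\varepsilon)$ times the optimum of $\mathcal{G}$.

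For the running time, each component $C$ with $n_C$ vertices and $L_C$ time edges costs $L_C^{O(1/\varepsilon)}\poly(n_C)$ by Theorem~\ref{thm:ptas-delta}. Summing over components gives at most $L'^{O(1/\varepsilon)}\poly(n)$ in total, which is at most $L^{O(1/\varepsilon)}\poly(n)$ using $L'\leq L$. Adding the polynomial preprocessing leaves this bound unchanged.

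The only mildly delicate step is this accounting. The parameters of Theorem~\ref{thm:ptas-delta} ($L$, $B$, $n$) must be evaluated on the reduced instance and then bounded by those of the original. This relies on $B'\leq L'\leq L$ and on the componentwise split not increasing the total cost. Everything else follows directly from the earlier results.
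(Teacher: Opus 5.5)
Your proposal is correct and follows essentially the same route as the paper. Both arguments reduce via Observation~\ref{obs:gamma-to-delta} with $\Delta=\gamma$, bound the number of introduced time edges by $L$ through runs of consecutive labels, apply Theorem~\ref{thm:ptas-delta} to each non-trivial component of the resulting forest, and translate the union back. Your per-component running-time accounting is, if anything, slightly more explicit than the paper's.
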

\begin{proof}
  Apply the reduction from maximum $\gamma$-matching to maximum $\Delta$-matching with $\Delta=\gamma$ from Observation~\ref{obs:gamma-to-delta}.
  If the reduced instance has no time edge, then the original instance contains no $\gamma$-edge, and the empty matching is optimal.
  Otherwise, the reduced instance has as its underlying graph a subgraph of $G$, and hence a forest.
  Moreover, it has at most $L$ time edges, since each maximal interval of $m$ consecutive time-labels on an edge contributes at most $m$ starting ticks of $\gamma$-edges, and hence at most $m$ introduced time edges, in the reduced instance.
  By Observation~\ref{obs:gamma-to-delta}, the reduction preserves feasible solutions and their cardinalities, and is therefore approximation-preserving.

  We apply Theorem~\ref{thm:ptas-delta} to each component of the reduced forest that contains at least one edge.
  Since the components are vertex-disjoint, the optimum value is the sum of the optimum values over the components, and the union of the componentwise $(1-\varepsilon)$-approximate solutions is a $(1-\varepsilon)$-approximate solution for the whole reduced instance.
  Translating this solution back through the reduction gives a $(1-\varepsilon)$-approximation for the original maximum $\gamma$-matching instance.
  Since the reduced instance has at most $L$ time edges in total, the total running time is $L^{O(1/\varepsilon)}\poly(n)$.
\end{proof}

\bibliographystyle{plain}
\bibliography{references}

\end{document}